\newcommand{\name}{{\sc Retreet}}
\newcommand{\blocks}{{\sf Blocks}}
\newcommand{\allblocks}{{\sf AllBlocks}}
\newcommand{\allconds}{{\sf AllConds}}
\newcommand{\Cond}{\textit{C}}
\newcommand{\params}{{\sf Params}}
\newcommand{\allparams}{{\sf AllParams}}
\newcommand{\allfuncs}{{\sf AllFuncs}}
\newcommand{\allcalls}{{\sf AllCalls}}
\newcommand{\allnoncalls}{{\sf AllNonCalls}}
\newcommand{\hide}[1]{} 
\newcommand{\etal}{\textit{et al.}\@\xspace}
\numberwithin{equation}{section}
\newcommand{\sif}[3]{{\sf if}~{\sf (}#1{\sf )}~#2~{\sf else}~#3}
\newcommand{\strue}{{\sf true}}
\newcommand{\snil}{{\sf nil}}
\definecolor{dkgreen}{rgb}{0,0.3,0}
\definecolor{gray}{rgb}{0.5,0.5,0.5}
\definecolor{mauve}{rgb}{0.58,0,0.82}
\definecolor{light-gray}{gray}{0.80}
\lstdefinelanguage{spmd}{
  morekeywords = {
       loc, bit, bool, true, false
     , implements, harness
     , null
     , assert, assume
     , else
     , find, fix, fold, for, forall, function
     , generator, gen
     , if, while, int, float, bool, string
     , loop, simple, cond, val
     , fork, join
     , nil, null, none, new, malloc
     , option, or
     , ref, return
     , spmdfork, nprocs, spmdtransfer
     , void
     , concrete, sym
     , requires, ensures
     , invariant, decreases 
     , conj, exp
     , init, stmt},
  literate=
    {-}{--}1,
  morecomment=[l]{//}
}
\renewcommand{\scriptsize}{\fontsize{8.5}{9}\selectfont}
\newcommand{\code}[1]{\textsf{#1}}
\newcommand{\conf}[1]{}
\newcommand{\Mona}{\textsc{Mona}\xspace}
\newcommand{\nil}{{\textit{nil}}}
\newcommand{\dir}{\textit{dir}}
\newcommand{\LExpr}{\textit{LExpr}}
\newcommand{\AExpr}{\textit{AExpr}}
\newcommand{\BExpr}{\textit{BExpr}}
\newcommand{\Stmt}{\textit{Stmt}}
\begin{document}

\title{Reasoning About Recursive Tree Traversals}         

\author{Yanjun Wang\inst{1} \and
Jinwei Liu\inst{2} \and
Dalin Zhang\inst{2} \and
Xiaokang Qiu\inst{1} }
\institute{Purdue University\\
\email{\{wang3204,xkqiu\}@purdue.edu}
\and 
Beijing Jiaotong University\\
\email{\{12251187,dalin\}@bjtu.edu.cn}}

\maketitle

\begin{abstract}
Traversals are commonly seen in tree data structures, and performance-enhancing transformations between tree traversals are critical for many applications. Existing approaches to reasoning about tree traversals and their transformations are ad hoc, with various limitations on the class of traversals they can handle, the granularity of dependence analysis, and the types of possible transformations. We propose \name{}, a framework in which one can describe general recursive tree traversals, precisely represent iterations, schedules and dependences, and automatically check data-race-freeness and transformation correctness. The crux of the framework is a stack-based representation for iterations and an encoding to Monadic Second-Order (MSO) logic over trees. Experiments show that our framework can reason about traversals with sophisticated mutual recursion on real-world data structures such as CSS and cycletrees.
\end{abstract}

\section{Introduction}

Trees are one of the most widely used data structures in computer programming and data representations. Traversal is a common means of manipulating tree data structures for various systems, as diverse as syntax trees for compilers~\cite{petrashko17}, DOM trees for web browsers~\cite{Meyerovich2013}, and $k$-d trees for scientific simulation~\cite{rajbhandari2016fusing,rajbhandari2016sc,jo11oopsla,jo12oopsla}. Due to dependence and locality reasons, these traversals may iterate over the tree in many different orders: pre-order, post-order, in-order or more complicated, and parallel for disjoint regions of the tree. A tree traversal can be regarded as a sequence of \emph{iterations} (of each executing a code block on a tree node) \footnote{We call it an iteration because it is equivalent to a loop iteration in a loop.} and many transformations essentially tweak the order of iterations for better performance or code quality, with the hope that no dependence is violated.

Matching this wide variety of applications, orders, and transformations, there has been a fragmentation of mechanisms that represent and analyze tree traversal programs, each making different assumptions and tackling a different class of traversals and transformations, using a different formalism. For example, Meyerovich~\etal~\cite{Meyerovich2010,Meyerovich2013} use attribute grammars to represent webpage rendering passes and automatically compose/parallelize them, but the traversals representable and fusible are limited, as the dependence analysis is coarse-grained at the attribute level. TreeFuser~\cite{Sakka2017} uses a general imperative language to represent traversals, but the dependence graph it can build is similarly coarse-grained. In contrast, the recently developed PolyRec~\cite{polyrec} framework supports precise instance-wise analysis for tree traversals, but the underlying transducer representation limits the traversals they can handle to a class called perfectly nested recursion. All these mechanisms are ad hoc and incompatible, making it impossible to represent more complicated traversals or combine heterogeneous transformations. For instance, a simple, mutually recursive tree traversal is already beyond the scope of all existing approaches.

\begin{figure}[t!b]
\begin{center}
\includegraphics[scale=0.25]{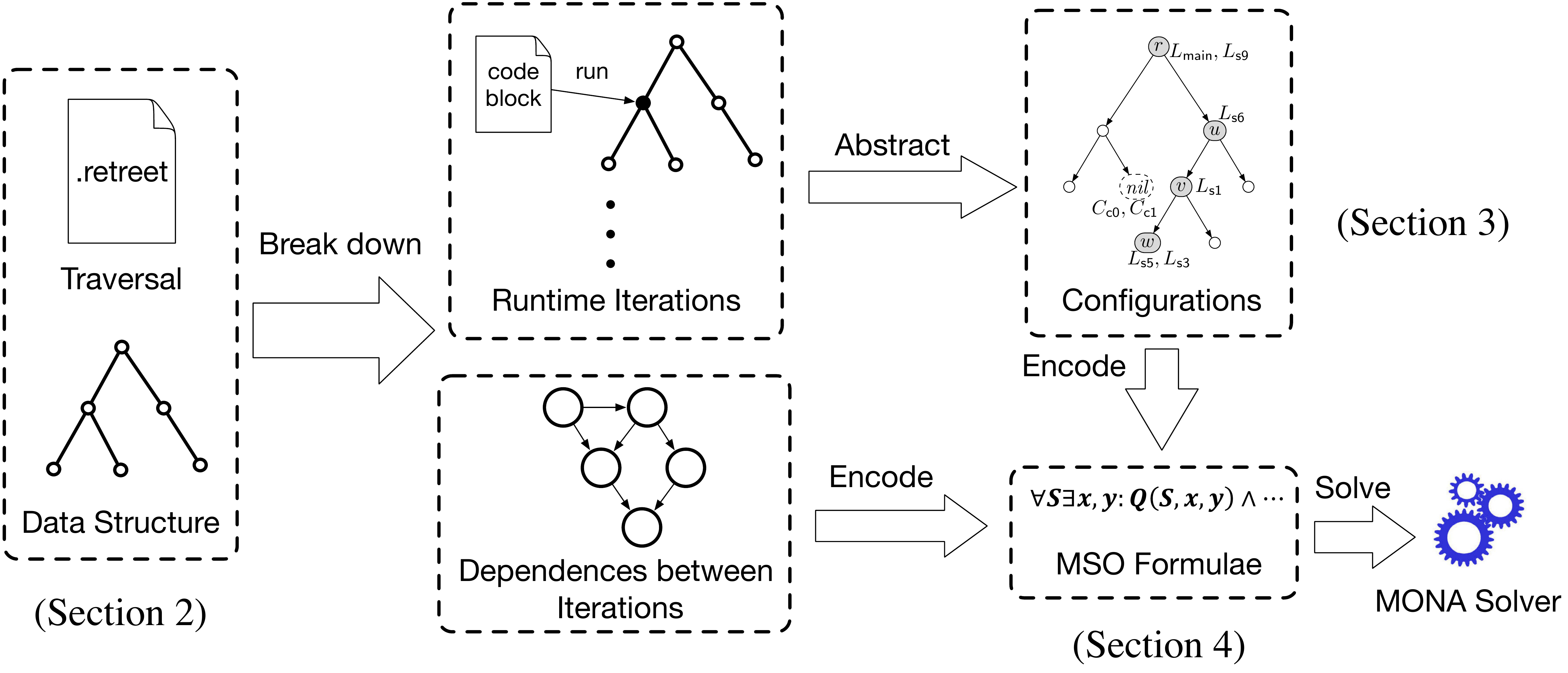}
\end{center}
\caption{\name{} Reasoning Framework}\label{fig:overview}
\end{figure}

To this end, we present \name{}, a general framework (as illustrated in Figure~\ref{fig:overview}) in which one can write almost arbitrary tree traversals, reason about dependences between iterations of fine granularity, and check correctness of transformations automatically. This framework features an abstract yet detailed characterization of iterations, schedules and dependences, which we call \emph{Configuration}, as well as a powerful reasoning algorithm. In this paper, we first present \name{} (``REcursive TREE Traversal'') as an expressive language that allows the user to flexibly describe tree traversals in a recursive fashion (Section~\ref{sec:syntax}). Second, we propose Configuration as a detailed, stack-based abstraction for dynamic instances in a traversal (Section~\ref{sec:configuration}). This abstraction can be encoded to Monadic Second-Order logic over trees, which allows us to reason about dependences and check data-race-freeness and equivalence of \name{} programs (Section~\ref{sec:encoding}). Finally, we show this framework is practically useful by checking the correctness of four different classes of tree traversals, including fusing and parallelizing real-world applications such as CSS minification and Cycletree routing (Section~\ref{sec:case}).

\section{A Tree Traversal Language}
\label{sec:syntax}

\begin{figure}[t!b]
\scriptsize
\begin{displaymath}
\begin{array}{lllll}
\dir \in \code{Loc}~\textrm{Fields}~~ & v \in \code{Int}~\textrm{Vars}~~ & n \in \code{Loc}~\textrm{Vars}~~ & f \in \code{Int}~\textrm{Fields}~~ & g : \textrm{Function} \textrm{~IDs}
\end{array}
\end{displaymath}
\begin{displaymath}
\begin{array}{rcl}
\LExpr & ::= & n ~\big|~ \LExpr.\dir \\
\AExpr & ~::=~ & 0 ~\big|~ 1 ~\big|~ n.f ~\big|~ v ~\big|~ \AExpr + \AExpr ~\big|~ \AExpr -\AExpr \\
\BExpr & ::= & LExpr == \snil ~\big|~ \strue ~\big|~ \AExpr > 0 ~\big|~
 \code{!} ~\BExpr ~\big|~ \BExpr~ \code{\&\&} ~\BExpr \\
\textit{Assgn} & ::= & n.f = \AExpr ~\big|~ v = \AExpr ~\big|~ \code{return}~\bar{v}\\ 
\textit{Block} & ::= & \bar{v} = t(\LExpr, \overline{\AExpr}) ~\big|~ \textit{Assgn}^+ \\
\Stmt & ::= & 
 \textit{Block} ~\big|~ \sif{\BExpr}{\Stmt}{\Stmt} ~\big|~ \Stmt~;~\Stmt ~\big|~ \{ \Stmt \parallel \Stmt \} \\
\textit{Func} & ::= & g(n, \bar{v}) \{~ \Stmt ~\}~\footnote{sss sdss} \\
\textit{Prog} & ::= & \textit{Func}^+  \\
\end{array}
\end{displaymath}
$^1$ Any function $g(n, \bar{v})$ should not contain recursive calls to $g(n, \dots)$, regardless of directly in $\Stmt$ or indirectly through inlining arbitrarily many calls in $\Stmt$.
\caption{Syntax of \name{}}\label{fig:syntax}
\end{figure}

In this section, we present \name{}, our imperative, general tree traversal language. \name{} programs execute on a tree-shaped heap which consists of a set of locations. Each location, also called node, is the root of a (sub)tree and associated with a set of pointer fields $dir$ and a set of local fields $f$. Pointer fields $dir$ contains the references to the children of the original location; local fields stores the local ${\sf Int}$ values. 

The syntax of \name{} is shown in Figure~\ref{fig:syntax}. A program consists of a set of functions; each has a single \code{Loc} parameter and optionally, a vector of \code{Int} parameters. We assume every program has a \code{Main} function as the entry point of the program. 
The body of a function comprises \textit{Blocks} of code combined using conditionals, sequentials and parallelizations.

A block of code is either a function call or a straight-line sequence of assignments. A function call takes as input a \LExpr~which can be the current \code{Loc} parameter or any of its descendant, and a sequence of \AExpr's of length as expected. Each \AExpr~is an integer expression combining \code{Int} parameters and local fields of the \code{Loc} parameter. Non-call assignments compute values of \AExpr's and assign them to \code{Int} parameters, fields or special return variables. Note that the functions in \name{} can be mutually recursive, i.e., two or more functions call each other. However, there is a special syntactic restriction: every function $g(n, \bar{v})$ should not call, directly or indirectly through inlining, itself, i.e., $g(n, \dots)$ with arbitrary \code{Int} arguments (see more discussion below).

The semantics of \name{} is common as expected and we omit the formal definition. In particular, all function parameters are call-by-value; the parallel execution adopts the statement-level interleaving semantics (every execution is a serialized interleaving of atomic statements).

\begin{figure}[t!]
\scriptsize
\begin{subfigure}[c]{0.32\columnwidth}
\begin{lstlisting}[basicstyle=\sffamily \scriptsize]
  Odd(n)
    if (n == nil)	// c0
    	return 0	// s0
    else
    	ls = Even(n.l)			// s1
    	rs = Even(n.r)			// s2
    	return ls + rs + 1	// s3
\end{lstlisting}
\end{subfigure}
\begin{subfigure}[c]{0.33\columnwidth}
\begin{lstlisting}[basicstyle=\sffamily \scriptsize]
  Even(n)
    if (n == nil)		// c1
    	return 0		// s4
    else
    	ls = Odd(n.l)		// s5
    	rs = Odd(n.r)		// s6
    	return ls + rs	// s7
\end{lstlisting}
\end{subfigure}
\begin{subfigure}[c]{0.32\columnwidth}
\begin{lstlisting}[basicstyle=\sffamily \scriptsize]
  Main(n)
    { 
      o = Odd(n) $\parallel$	// s8
      e = Even(n)			// s9
    }
    return (o, e)			// s10
\end{lstlisting}
\end{subfigure}
\caption{Example of mutually recursive tree traversals}
\label{fig:number}
\end{figure}

\paragraph{Example:} Figure~\ref{fig:number} illustrates our running example, which is a pair of mutually recursive tree traversals. \code{Odd(n)} and \code{Even(n)} count the number of nodes at the odd and even layers of the tree \code{n}, respectively (\code{n} is at layer 1, \code{n.l} is at layer 2, and so forth). \code{Odd} and \code{Even} recursively call each other; and the \code{Main} function runs \code{Odd} and \code{Even} in parallel, and return the two computed numbers.
Note that the mutual recursion is beyond the capability of existing automatic frameworks that handle tree traversals~\cite{Amiranoff,Weijiang2015,Meyerovich2010,Meyerovich2013,Sakka2017,polyrec}.


\subsection{Discussion of the Language Design}

We remark about some critical design features of \name{}. In a nutshell, \name{} has been carefully designed to be \emph{maximally permissible} of describing tree traversals, yet \emph{encodable} to the MSO logic. More specifically, three major design features make possible our MSO encoding presented in Section~\ref{sec:encoding}: \emph{obviously terminating}, \emph{single node traversal} and \emph{no-tree-mutation}. Despite these restrictions, \name{} is still more general and more expressive than the state of the art---to the best of our knowledge, all the restrictions we discuss below can be seen in all existing approaches (find more discussion in Section~\ref{sec:related}).

{\bf Termination:} \name{} describes \emph{obviously terminating} tree traversals. Note that the ``$g(n, \bar{v})$ does not call any $g(n, \dots)$'' restriction does not only guarantee the termination, but also bounds the steps of executions. With this restriction, every function call makes progress toward traversing the tree downward. Hence, the height of the call stack will be bounded by the height of the tree, and every statement~\footnote{Notice that two different call sites of the same function are considered two different statements. So the number of statements is bounded by the size of the program.} is executed on a node at most once. Therefore, running a \name{} program $P$ on a tree $T$ will terminate in $\mathcal{O}(|P|^{h(T)})$ steps where $h(T)$ is the height of the tree. This bound is critical as it allows us to encode the program execution to a tree model, with only a fixed amount of information on each node. In contrast, \name{} excludes the following program: \code{A(n, k): if (k $<$= 0) return 0; else return A(n, k-1) + ...}
The program terminates, but the length of execution on node \code{n} is determined by the input value \code{k}, which can be arbitrarily large and makes our tree-based encoding impossible. 

{\bf Single node traversal:} In \name{}, all functions take only one \code{Loc} parameter. Intuitively, this means the tree traversal is not allowed to manipulate more than one node at one time. This is a nontrivial restriction and necessary for our MSO encoding. The insight of this restriction will be clearer in Section~\ref{sec:encoding}.

{\bf No tree mutation:} Mutation to the tree topology is generally disallowed in \name{}. General tree mutations will possibly affect the tree-ness of the topology, where our tree-based encoding can not fit in. However, we can simulate a limited class of tree mutations by using mutable local fields. See more details in our tree-mutation example in Section~\ref{sec:case}.

For the simplification of presentation, \name{} focuses on programs without loops or global variables. These restrictions are not essential because loops or global variables can be rewritten to recursion and local variables, respectively. As long as the rewritten program satisfies the real restrictions we set forth above, it can be handled by our framework. See our discussions below.

\paragraph{Loop-freeness:} \name{} does not allow iterative loops. Recall that \name{} is meant to describe tree traversals, and the no-self-call syntactic restriction guarantees that the program manipulates every node only a bounded number of times, and hence the termination of the program. Similarly, a typical loop or even nested loop traversing a tree only computes a limited number of steps on each node, and can be naturally converted to recursive functions in \name{}.

\paragraph{No global variables:} We omit global variables in \name{}. However, it is not difficult to extend for global variables. Note that when the program is sequential, i.e., no concurrency, one can simply replace a global variable with an extra parameter for every function, which copies in and copies out the value of the global variable. In the presence of concurrency, we need to refine the current syntax to reason about the schedule of manipulations to global variables. Basically, every statement accessing a global variable forms a separate \textit{Block}, so that we can compare the order between any two global variable operations.

\vspace{.1in}
In the rest of the paper, we also assume: all trees are binary with two pointer fields \code{l} and \code{r}, every function only calls itself or other functions on \code{n.l} or \code{n.r}, and returns only a single {\sf Int} value, and every boolean expression is atomic, i.e., of the form $LExpr == \snil$ or $\AExpr > 0$. In addition, we assume the program is free of null dereference, i.e., every term $\textit{le}.\dir$ is preceded by a guard $\textit{le} \code{ != nil}$. Note that relaxing these assumptions will not affect any result of this paper, because any \name{} program violating these assumptions can be easily rewritten to a version satisfying the assumptions.

\section{Iteration Representation}
\label{sec:configuration}

We consider code blocks as atomic units of \name{} programs. Code blocks (function calls or straight-line assignments) are building blocks of \name{} programs and are a key to our framework. In our running example (Figure~\ref{fig:number}), there are 11 blocks. We number the blocks with \code{s0} through \code{s10}, as shown in the comment following each block. Then the execution of a \name{} program is a sequence of iterations, each running a non-call code block on a tree node. For example, consider executing our running example on a single-node $u$ (i.e., $u.{\sf l} = u.{\sf r} = {\sf nil}$), one possible execution is a sequence of iterations (also called instances in the literature): $({\sf s0}, u.l), ({\sf s0}, u.r), ({\sf s7}, u), ({\sf s4}, u.l), ({\sf s4}, u.r), ({\sf s3}, u)$. Note that every iteration is unique and appears at most once in a traversal, as per the syntactic restriction of \name{}. However, this representation is not sufficient to reason about the dependences between steps. For example, if the middle steps $({\sf s7}, u), ({\sf s4}, u.l)$ were swapped, is that still a possible sequence of execution? The question can't be answered unless we track back the contexts in which the two steps are executed: $({\sf s7}, u)$ is executed in the call to \code{Even($u$)} (block \code{s9}); $({\sf s4}, u.l)$ is executed in the call to \code{Even($u$.l)}, which is further in \code{Odd($u$)} (block \code{s8}). As the two calls are running in parallel, swapping the two steps yields another legal sequence of execution. Automating this kind of reasoning is extremely challenging. In fact, even determining if an instantiation exists is already undecidable:
\begin{theorem}\label{thm:undecidability}
Determining if an iteration may occur in a \name{} program execution is undecidable.
\end{theorem}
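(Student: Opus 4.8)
The plan is to prove undecidability by reduction from the halting problem for two-counter (Minsky) machines, which is known to be undecidable. The key observation is that, although the tree-recursion in \name{} is obviously terminating, the \code{Int} parameters are unbounded and can be incremented, decremented, and tested against zero, so they can faithfully simulate the two counters of a Minsky machine. The only resource that the obvious-termination restriction bounds is the number of steps executed \emph{on a single node}; by forcing every simulated machine step to descend to a child, an arbitrarily long computation can be laid out along a sufficiently tall tree. Since the problem is interesting only when the input tree is not fixed (on a fixed finite tree every execution is bounded and the question is trivially decidable), I read it as: given a program $P$ and a target block $s^*$, decide whether there exist a tree $T$, an execution of $P$ on $T$, and a node $u$ such that the iteration $(s^*, u)$ occurs.

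Concretely, given a Minsky machine $M$ with instructions $q_0, \dots, q_m$ over two counters $c_1, c_2$, I would build a \name{} program $P_M$ with one function $g_q(n, c_1, c_2)$ per instruction $q$. An increment instruction ``$c_i{+}{+}$; goto $q'$'' becomes a call $g_{q'}(n.l, \dots)$ whose $i$-th counter argument is the \AExpr{} $c_i + 1$; a decrement-with-zero-test ``if $c_i > 0$ then $c_i{-}{-}$; goto $q'$ else goto $q''$'' becomes $\sif{c_i > 0}{g_{q'}(n.l, \dots, c_i - 1, \dots)}{g_{q''}(n.l, \dots)}$, where the guard $c_i > 0$ and its negation $\code{!}(c_i > 0)$ (which means $c_i = 0$, since all counters stay nonnegative) encode the zero test within \name{}'s restricted \BExpr{} grammar. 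The \code{Main} function initializes both counters to $0$ and calls $g_{q_0}$ on a child of the root, and I designate the block that realizes the halt instruction as the target $s^*$. The call-by-value semantics of \code{Int} parameters is exactly what lets the counter values be threaded down the path.

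Because every call in this construction descends to a strict child, and an \LExpr{} may only name the current node or a descendant, no execution can ever revisit an ancestor; hence $g_q(n, \dots)$ never calls $g_q(n, \dots)$ directly or through inlining, the no-self-call restriction is satisfied, and $P_M$ is a legal \name{} program. The correctness argument is then a step-by-step simulation: a halting run of $M$ of length $k$ corresponds exactly to an execution of $P_M$ along a downward path of length $k$ in any tree of height at least $k$, with the counter values carried in the \code{Int} parameters; conversely, every execution that reaches $s^*$ projects back to a halting run of $M$. Thus $(s^*, u)$ occurs for some tree $T$, some execution, and some node $u$ iff $M$ halts on input $(0,0)$, which establishes the reduction.

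The step I expect to be the main obstacle is making precise the two-way correspondence between an unbounded Minsky computation and a necessarily finite, tree-height-bounded \name{} execution. The forward direction hinges on observing that a halting computation is finite, so it can be accommodated by existentially choosing a tall enough (e.g.\ path-shaped) tree; the backward direction requires arguing that the descend-on-every-step discipline makes the simulation deterministic and faithful, with each node contributing exactly one machine step. A secondary point to verify carefully is that keeping the counters nonnegative---by only performing a decrement inside the $c_i > 0$ branch---is precisely what licenses reading $\code{!}(c_i > 0)$ as a genuine test for zero rather than merely a test for nonpositivity.
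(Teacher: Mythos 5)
Your proposal is correct and follows essentially the same route as the paper: a reduction from the halting problem for two-counter Minsky machines, with one \name{} function per instruction, counters threaded through the \code{Int} parameters, and every simulated step descending to a child so that the no-self-call restriction is respected. The only cosmetic difference is that the paper routes the halt signal back up so the target iteration is $({\sf s}, \textit{root})$, whereas you place the target block at whatever node the simulation reaches; both choices work, and your additional care about the zero test and the existential choice of a sufficiently tall tree fills in details the paper leaves implicit.
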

\begin{proof}
See Appendix~\ref{sec:proof} of~\cite{extended}.
\end{proof}

\subsection{Configurations}

\begin{figure}[t!b]
\scriptsize
\begin{subfigure}[b]{.45\textwidth}
\centering
\begin{tabular}{|c|c|}
\hline
Record Num & Content \\
\hline
~0~ & $~(\code{main}, r, \code{s8}=5, \dots)~$ \\
\hline
~1~ & $~(\code{s9}, r, \code{s5}=3, \dots)~$ \\
\hline
~2~ & $~(\code{s6}, u, \dots)~$ \\
\hline
~3~ & $~(\code{s1}, v, \dots)~$ \\
\hline
~4~ & $~(\code{s5}, w, \code{s1}=0, \code{s2}=0)~$ \\
\hline
~5~ & $~(\code{s3}, w)~$ \\
\hline
\end{tabular}
\caption{A configuration}
\label{fig:stack}
\end{subfigure}
\hfill
\begin{subfigure}[b]{.5\textwidth}
\centering
\includegraphics[scale = 0.8]{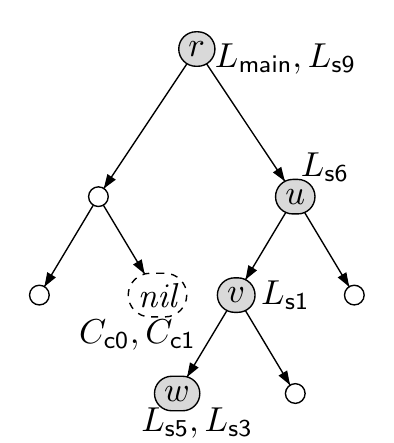}
\caption{Represented as labels on the tree}
\label{fig:labeling}
\end{subfigure}
\caption{Example of Configuration Encoding}
\label{fig:example}
\end{figure}

As precise reasoning about \name{} is undecidable, we propose an iteration representation called \emph{configuration}, which is a right level of abstraction for which automated reasoning is possible. Intuitively, a configuration looks like a snapshot of the call stack. The top record describes the current running block as we discussed above. Each other record describes a call context which includes: the callee block, the single \code{Loc} parameter, and other \code{Int} variables' values. These \code{Int} values are a bit unusual: first, for each \code{Int} parameter, the context records its \emph{initial} value received when the call begins; second, for each function call within the current call, the context uses \emph{ghost variables} to predict the return values.

\paragraph{Example:} Figure~\ref{fig:stack} gives an example of a configuration, which consists of 6 records. The top record indicates that the current step is running block \code{s3} on tree node $w$, and the current values of local variables. In other records, we only show the callee stack, the \code{Loc} parameter, and other relevant \code{Int} variables. For example, the value ${\sf s8}=5$ means that the call in \code{s8} is predicted to finish and return value $5$, which might be relevant to the next call context, \code{s9}.

\vspace{.1in}
Obviously, all stacks of records are not valid configurations. In particular, the beginning record should run \code{main} and the last record should run a non-call block. More importantly, for any non-beginning record, 
the path condition of the block should be satisfied, i.e., this block of code can be reached from the beginning of the function it belongs to.
While a precise characterization of these constraints is expensive and leads to undecidability as per Theorem~\ref{thm:undecidability}, our key idea is to loosen these constraints using an abstraction called \emph{speculative execution}:

\begin{definition}[Speculative Execution]
Given a function $f$, a group of initial values $I: \params(f) \rightarrow \mathbb{Z}$ and a group of speculative outputs $O: \blocks(f) \cap \allcalls \rightarrow \mathbb{Z}$, a speculative execution of $f$ with respect to $I$ and $O$ follows the following steps:
1) initialize every parameter $p$ with value $I(p)$, and let the current block ${\sf c}$ be the first block in $f$;
2) if ${\sf c}$ is not a call, then simulate the execution of ${\sf c}$, and move to the next block;
3) if ${\sf c}$ is a call of the form $v = g(le, \bar{ie})$, then update $v$'s value with $O({\sf c})$.
\end{definition}

Intuitively, speculative execution abstracts normal execution of a recursive function by replacing all recursive calls with a speculative return value, which is given as an input at the beginning of the execution. With the speculative execution we can now formally define configuration, which overapproximates real configurations possible in an execution.

\begin{definition}[Configuration]
\label{def:configuration}
A configuration of length $k$ on a tree $T$ is a mapping $\mathcal{C} : [k] \rightarrow \allblocks \times {\sf Nodes}(T) \times (\allparams \cup \allcalls \rightharpoonup \mathbb{Z})$ such that:
\begin{itemize}
\item For any $0 \leq i < k$, $\mathcal{C}(i)$ is of the form $({\sf s}, u, M)$ where ${\sf s} \in \allcalls$ is a call to a function $f$, and $M$ is only defined on $\params(f) \cup \blocks(f)$.
\item The last record $\mathcal{C}(k)$ is of the form $({\sf s}, u, \emptyset)$, where ${\sf s} \in \allnoncalls$. 
\item The first record $\mathcal{C}(0)$ is of the form $({\sf main}, \textit{root}_T, ...)$.
\item For any two adjacent records $\mathcal{C}(i-1) = ({\sf s}, u, M)$, $\mathcal{C}(i) = ({\sf t}, v, N)$, 
${\sf s}$ is a call to the function that ${\sf t}$ belongs to (denoted as ${\sf s} \triangleleft {\sf t}$).
Moreover, speculatively executing the function with respect to initial values $M|_{\params(g)}$ and speculative outputs $M|_{\blocks(g)}$ leads to record $({\sf t}, v, N)$. 
\end{itemize}
\end{definition}
\noindent Note that all the sets mentioned in Definition~\ref{def:configuration} (e.g., $\allblocks, \allparams, \allcalls$) are self explanatory and we leave their definitions in Figure~\ref{fig:sets} of Appendix~\ref{sec:block}  of~\cite{extended}. We call the last part of the definition above \emph{reachability} from $({\sf s}, u, M)$ to $({\sf t}, v, N)$, and show that the reachability can be represented as a logical constraint:
\begin{lemma}\label{thm:pathcond}
Let $({\sf s}, u, M)$ and $({\sf t}, v, N)$ be two records such that ${\sf s} \triangleleft {\sf t}$.
Then $({\sf s}, u, M)$ reaches $({\sf t}, v, N)$ if $(u, v, M, N)$ satisfies 
\[\textit{PathCond}_{\code{s,t}}(u, v, M, N) ~~\equiv~~ \displaystyle \textit{Match}_{\code{s,t}}(u, v, M, N) \wedge \bigwedge_{\code{c} \in \code{Path(t)}} \textit{WP}(\code{c}, M)  \] 
\end{lemma}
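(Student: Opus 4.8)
The plan is to prove the equivalence between reachability (defined via speculative execution in Definition~\ref{def:configuration}) and the formula $\textit{PathCond}_{\code{s,t}}$, of which the stated ``if'' direction is one half; the weakest-precondition correspondence I set up is bidirectional, so both directions fall out together. Throughout, fix the callee $g$ to be the function that ${\sf t}$ belongs to, so that ${\sf s} \triangleleft {\sf t}$ means ${\sf s}$ calls $g$, and $M$ is defined on $\params(g) \cup \blocks(g)$, supplying the initial parameter values $I = M|_{\params(g)}$ and the speculative outputs $O = M|_{\blocks(g)}$.

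First I would pin down the syntactic path to ${\sf t}$. Because $g$ is loop-free and obviously terminating, its body is a finite tree of $\textit{Block}$s combined by conditionals, sequencing and parallel composition, and ${\sf t}$ occurs at a unique AST position. I would define $\code{Path(t)}$ as the guards encountered on the unique branch from $g$'s entry to ${\sf t}$, each tagged with the polarity forced by the chosen branch, together with the straight-line assignments and the (speculatively resolved) calls that precede ${\sf t}$ on that branch. Next I would establish that the symbolic state reached just before ${\sf t}$ fires is a well-defined function of $M$: speculative execution never re-enters $g$, every call block is resolved by reading $O$, and every non-call block is a sequence of assignments, so along the path each variable, field value, and call result read at any program point is a closed expression over $M$. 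The one source of nondeterminism is parallel composition, which I would argue is harmless here, either because sibling branches do not write the variables read on ${\sf t}$'s own path (so every interleaving yields the same symbolic values---and data-race-freeness, which would guarantee this, is checked separately by the framework) or by committing to a canonical linearization of the prefix. This is the step I expect to be the main obstacle: making the ``state at ${\sf t}$'' unambiguous in the presence of $\parallel$, and ensuring that a call result produced earlier on the path is subsequently read from $M|_{\blocks}$ rather than recomputed.

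With the symbolic state in hand, the two conjuncts of $\textit{PathCond}$ follow by standard weakest-precondition reasoning. For the guards, I would show by backward induction along $\code{Path(t)}$ that reaching ${\sf t}$ is equivalent to taking every branch correctly, and that pushing each guard $\code{c}$ back through the preceding assignments (substituting $M|_{\blocks}$ for intervening call results) yields exactly $\textit{WP}(\code{c}, M)$; their conjunction $\bigwedge_{\code{c} \in \code{Path(t)}} \textit{WP}(\code{c}, M)$ therefore holds iff control can flow from $g$'s entry to ${\sf t}$. For the arguments, I would show $\textit{Match}_{\code{s,t}}(u,v,M,N)$ captures the data transfer at the call ${\sf t}$: the $\code{Loc}$ argument $v$ equals the $\LExpr$ of ${\sf t}$ evaluated at $u$ (e.g.\ $v = u.\code{l}$ or $v = u.\code{r}$), and $N$ restricted to the callee's parameters equals the tuple of $\AExpr$ arguments of ${\sf t}$ evaluated in the symbolic state, while $N$ on the callee's blocks is left free, reflecting the over-approximation.

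Combining the two conjuncts gives that $({\sf s},u,M)$ reaches $({\sf t},v,N)$ iff $\textit{PathCond}_{\code{s,t}}(u,v,M,N)$ holds, which in particular yields the stated implication. The soundness of the individual $\textit{WP}$ step (that $\textit{WP}(\code{c}, M)$ is satisfied exactly when guard $\code{c}$ holds in the state induced by $M$ at that program point) is the one routine ingredient I would invoke without re-deriving, specialized to the atomic boolean and arithmetic expressions of \name{}.
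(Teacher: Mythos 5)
Your proposal is correct and follows essentially the same route as the paper: the paper's justification (its appendix on formulating reachability) likewise flattens the path from the callee's entry to ${\sf t}$ into a straight-line sequence $\code{l}_1; \code{assume}(\code{c}_1); \dots; \code{l}_n; \code{t}$, defines $\textit{WP}(\code{c}_i, M)$ by pushing each guard backward through the preceding assignments with call results read off $M|_{\blocks}$ and parameters substituted by $M|_{\params}$, and uses $\textit{Match}_{\code{s,t}}$ to capture the argument transfer producing $(v, N)$. The only place you are more careful than the paper is the treatment of parallel composition inside the callee's body, which the paper sidesteps by simply asserting that speculative execution is ``completely deterministic''; your observation that this needs either a non-interference argument or a canonical linearization is a fair refinement but does not change the approach.
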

\noindent Here $\textit{Match}_{\code{s,t}}(u, v, M, N)$ says speculative execution starting from $M$ leads to $N$; $\textit{WP}(\code{c}, M)$ says the weakest precondition of \code{c} is satisfied by $M$ (details in Appendix~\ref{sec:wp} of~\cite{extended}).

\paragraph{Example:} 
Consider a code block \code{s} calling a function \code{func(n, p, r0) \{ n.f = p + 1 ; r1 = r0; if (n.f $<$ r1) \{...\} else \{ t \} \}}, where \code{t} is a recursive call to \code{func(n.l)}. For record $({\sf s}, u, M)$ to reach record $({\sf t}, v, N)$, there is only one condition, \code{n.f $<$ r1}, which occurs negatively. In other words, 
the code sequence reaching \code{t} is \code{ n.f = p + 1 ; r1 = r0; assume (n.f $\geq$ r1); t }. 
In addition, since code block $s$ and $t$ invoke function \code{func} on node $n$ and $n.l$ respectively, $\textit{Match}(u, v, M, N)$ should ensure that $v$ is the left child of $u$, i.e. in this case, $\textit{Match}(u, v, M, N) \equiv u.l = v$.
Therefore the path condition can be computed as $\textit{PathCond}_{\code{s,t}}(u, v, M, N) \equiv M(\code{p})+1 \geq M(\code{r0}) \wedge u.l = v$. 


\section{Encoding to Monadic Second-Order Logic}
\label{sec:encoding}

In this section, we show that fine-grained dependence analysis problems for \name{} can be encoded to Monadic Second-Order (MSO) logic over trees, a well known decidable logic. The syntax of the logic contains a unique $\textit{root}$, two basic operators $\textit{left}$ and $\textit{right}$. There is a binary predicate $\textit{reach}$ as the transitive closure of $\textit{left}$ and $\textit{right}$, and a special \textit{isNil} predicate with constraint $\forall v. \big( \textit{isNil}(v) \rightarrow \textit{isNil}(\textit{left}(v)) \wedge \textit{isNil}(\textit{right}(v)) \big)$. 

\subsubsection*{Encoding Configurations.}

First of all, we need to encode configurations we presented in Section~\ref{sec:configuration}.
Given a \name{} program, we define the following labels (each of which is a second-order variable):
\begin{itemize}
\item for each code block \code{s}, introduce a label (a second-order variable) $L_{\code{s}}$ such that $L_{\code{s}}(u)$ denotes that there exists a record $(\code{s}, u, \dots)$ in the configuration; 
\item for each branch condition \code{c}, introduce a label $\Cond_{\sf c}$ such that $\Cond_{\sf c}(u)$ denotes that $\textit{WP}(\code{c}, M)$ is satisfied by a record of the form $(\code{s}, u, M)$;
\item for each pair of blocks ${\sf s}$ and ${\sf t}$ such that ${\sf s} \triangleleft {\sf t}$, introduce a label $K_{\code{s}, \code{t}}$ such that $K_{\code{s}, \code{t}}(u,v)$ denotes that $\textit{Match}_{\code{s,t}}(u, v, M, N)$ is satisfied by records $({\sf s}, u, M)$ and $({\sf t}, v, N)$.
\end{itemize}
\noindent Note that these labels allow us to build an MSO predicate $\overline{\textit{PathCond}_{\code{s,t}}}$ as an abstracted version of the path condition $\textit{PathCond}_{\code{s,t}}$ defined in Lemma~\ref{thm:pathcond}:
\[\overline{\textit{PathCond}_{\code{s,t}}}(u, v) ~~\equiv~~ \displaystyle K_{\code{s,t}}(u, v) \wedge \bigwedge_{\code{c} \in \code{Path(t)}} \Cond_{\sf c}(u)  \] 

\paragraph{Example:} The configuration in Figure~\ref{fig:stack} can be encoded to labels on the tree in Figure~\ref{fig:labeling}. Note that the labels $\Cond_{\sf c0}$ and $\Cond_{\sf c1}$ are labeled on $\nil$ nodes only. If a node has a particular label, the node belongs to the set represented by the corresponding second-order variable. For example, node $u$ is in $L_{s6}$ but nodes $r, v$ and $w$ are not.

\vspace{.1in}
As the set of blocks and the set of conditions are fixed and known, we can simply represent these second-order variables using labeling predicates $L \subseteq \allcalls \cup \allnoncalls \times {\sf Nodes}(T)$ and $\textit{Cond} \subseteq \allconds \times {\sf Nodes}(T)$ such that $L(\code{s}, u)$ if and only if $L_{\code{s}}(u)$, $\Cond({\code{c}}, u)$ if and only if $\Cond_{\code{c}}(u)$.~\footnote{The sets mentioned here are self explanatory and their definitions can be found in Figure~\ref{fig:sets} in Appendix~\ref{sec:block} of~\cite{extended}.} In other words, $L(\code{s}, u)$ is the syntactic sugar for $L_{\code{s}}(u)$ and $\Cond({\code{c}}, u)$ is the syntactic sugar for $\Cond_{\code{c}}(u)$.


Now we are ready to encode configurations to MSO. Below we define a formula $\textit{Configuration}(L, \Cond, \code{q}, v)$ which means $L$ and $\Cond$ correctly represent a configuration with $(\code{q}, v, \dots)$ as the current record, for some non-call block $\code{q}$:
\begin{displaymath}
\begin{array}{l}
\textit{Configuration}(L, \Cond, {\sf q}, v) \equiv  L(\code{main}, \textit{root}) \\
\qquad \land~ \textit{Current}(L, \code{q}, v) 
 \land~ \forall u. \big( u \neq v \rightarrow \displaystyle \bigwedge\limits_{\code{s} \in \allnoncalls} \neg L(\code{s}, u) \big) \\ 
\qquad \land~ \forall u. \displaystyle \bigwedge\limits_{\code{s} \in \allcalls } \Big( L(\code{s}, u) \rightarrow \bigvee\limits_{\code{s} \triangleleft \code{t}} \big( \textit{Next}(L, \Cond, u, \code{s}, \code{t}) \wedge \bigwedge\limits_{\code{t} \sim \code{t'}, \code{t} \neq \code{t'}} \neg \textit{Next}(L, \Cond, u, \code{s}, \code{t'}) \big) \Big) \\
\qquad \land~ \forall u. \displaystyle \bigwedge\limits_{\code{t} \in \allcalls \cup \allnoncalls } \Big( L(\code{t}, u) \rightarrow \textit{Prev}(L, \Cond, u, \code{t})  \Big) \\
\qquad \land~ \forall u. \displaystyle \bigvee\limits_{{\sf C} \in {\sf ConsistentCondSet}} \Big( \bigwedge\limits_{{\sf c} \in {\sf C}} \Cond({\sf c}, u) \wedge \bigwedge\limits_{{\sf c} \notin {\sf C}} \neg \Cond({\sf c}, u) \Big)
\end{array}
\end{displaymath}
The first two lines claim that \code{main} is marked on the \textit{root}, and \code{q} is the only non-call block marked on the tree, where $\textit{Current}(L, \code{q}, v)$ is a subformula indicating that for the current node $v$, a record $(\code{q}, v, \dots)$ is in the stack for exactly one non-call block $\code{q}$: $\textit{Current}(L, \code{q}, v) \equiv L(\code{q}, v) \land \bigwedge\limits_{\code{q'} \in \allnoncalls, \code{q'} \neq \code{q}} \neg L(\code{q'}, v) $.

The next two lines, intuitively, say that every record has a unique successor (and predecessor) that can reach to (and from). $\code{t} \sim \code{t'}$ denotes that $\code{t}$ and $\code{t'}$ are from the same function. Predicates $\textit{Next}$ and $\textit{Prev}$ are defined as below: 
$$\textit{Next}(L, \Cond, u, \code{s}, \code{t}) \equiv \exists v.  \Big( L(\code{t}, v) \wedge \overline{\textit{PathCond}_{\code{s,t}}}(u, v) \Big)
$$
\begin{displaymath}
\begin{array}{l}
\textit{Prev}(L, \Cond, u, \code{t}) \equiv \exists v. \bigg( \displaystyle \bigvee\limits_{\code{s} \triangleleft \code{t}} \Big( L(\code{s}, v) \wedge \overline{\textit{PathCond}_{\code{s,t}}}(v, u)
\\
\qquad \qquad \qquad \qquad \qquad \qquad \wedge \bigwedge\limits_{\code{s'} \triangleleft \code{t}, s' \neq s } \neg \big( L(\code{s'}, v) \wedge \overline{\textit{PathCond}_{\code{s,t}}}(v, u)
\big) \Big) \bigg) \\
\end{array}
\end{displaymath}

\vspace{.1in}
The last line makes sure that for each node $u$, the set of satisfied conditions ${\sf C}$ is consistent, i.e., $\displaystyle \bigwedge\limits_{{\sf c} \in {\sf C}} \textit{WP}(\code{c}, M)$ is satisfiable for any record $(\code{s}, u, M)$. In other words, a consistent conditional set for a node $u$ represents a feasible conditional path from the root of the tree to reach node $u$. Notice that this is a linear integer arithmetic constraint and SMT-solvable. Hence we can assume the set of all possible consistent condition set, ${\sf ConsistentCondSet}$, has been computed a priori.

%

\subsubsection*{Schedules and Dependences.}

The definition and encoding of configurations above have paved the way for reasoning about \name{} programs. Given two configurations, a basic query one would like to make is about their order in a possible execution: can the two configurations possibly coexist? If so, are they always ordered? Or they can occur in arbitrary order due to the parallelization between them? To answer these questions, intuitively, we need to pairwisely compare the records in the two configurations from the beginning and find the place that they diverge. We define the following predicate:
\begin{displaymath}
\begin{array}{l}
\textit{Consistent}_{\sf s, t_1, t_2}(L_1, L_2, \Cond_1, \Cond_2) \equiv \exists z. \Big[ \\
 \forall v. \bigg( \textit{reach}(v,z) \rightarrow \Big( \displaystyle \bigwedge\limits_{\code{s}} \big( L_1(\code{s}, v) \leftrightarrow L_2(\code{s}, v) \big) ~\land~ \displaystyle \bigwedge\limits_{\code{c}} \big( \Cond_1(\code{c}, v) \leftrightarrow \Cond_2(\code{c}, v) \big) \Big) \bigg) \\
 \land~   L_1(\code{s}, z) \land L_2(\code{s}, z) \land \textit{Next}(L_1, \Cond_1, z, \code{s}, \code{t}_1) \land \textit{Next}(L_2, \Cond_2, z, \code{s}, \code{t}_2) \Big]
\end{array}
\end{displaymath}
The predicate assumes there are two sequences of records represented as $(L_1, \Cond_1)$ and $(L_2, \Cond_2)$, respectively, and indicates that there is a diverging record $({\sf s}, z, \dots)$ in both sequences such that: 1) the two configurations match on all records prior to the diverging record; 2) the next records after the diverging one are $({\sf t_1}, \dots)$ and $({\sf t_2}, \dots)$, respectively, and they can be reached at the same time (i.e., $\Cond_1$ and $\Cond_2$ agree on the diverging node $z$). 

${\sf t_1}$ and ${\sf t_2}$ are obviously in the same function and there are two possible relations between them: a) if ${\sf t_1}$ precedes ${\sf t_2}$ (or symmetrically, ${\sf t_2}$ precedes ${\sf t_1}$), then configuration $(L_1, \Cond_1)$ always precedes $(L_2, \Cond_2)$ (or vice versa); b) otherwise, ${\sf t_1}$ and ${\sf t_2}$ must be two parallel blocks, then the two configurations occur in arbitrary order. Both the two relations can be described in MSO (see Figure~\ref{fig:relation}). 

\begin{figure}[t!]
\begin{displaymath}
\begin{array}{l}
\textit{Ordered}(L_1, L_2, \Cond_1, \Cond_2) \equiv  \displaystyle \bigvee\limits_{\code{s},\code{t}_1,\code{t}_2 \atop{\code{s} \triangleleft \code{t}_1, \code{s} \triangleleft \code{t}_2, \code{t}_1 \prec \code{t}_2 } } \textit{Consistent}_{\sf s, t_1, t_2}(L_1, L_2, \Cond_1, \Cond_2)
\end{array}
\end{displaymath}
\begin{displaymath}
\begin{array}{l}
\textit{Parallel}(L_1, L_2, \Cond_1, \Cond_2) \equiv  \displaystyle \bigvee\limits_{\code{s},\code{t}_1,\code{t}_2 \atop{\code{s} \triangleleft \code{t}_1, \code{s} \triangleleft \code{t}_2, \code{t}_1 \parallel \code{t}_2 } } \textit{Consistent}_{\sf s, t_1, t_2}(L_1, L_2, \Cond_1, \Cond_2)
\end{array}
\end{displaymath}
\caption{Relations between consistent configurations}
\label{fig:relation}
\end{figure}


Another set of relations is necessary to describe the data dependences. We use a read\&write analysis to compute the read set $R_{\sf s}$ and write set $W_{\sf s}$ for each non-call block \code{s} (details in Appendix~\ref{sec:block} of~\cite{extended}). These sets allow us to define two binary predicates: $\textit{Write}_{\code{s}}(u, v)$ if running \code{s} on $u$ will write to $v$; $\textit{ReadWrite}_{\code{s}}(u, v)$ if running \code{s} on $u$ will read or write to $v$.
The following predicate describes two configurations $(L_1, \Cond_1, {\sf s}, u)$ and $(L_2, \Cond_2, {\sf t}, v)$ with data dependence: both last records $({\sf s}, u, \dots)$ and $({\sf t}, v, \dots)$ access the same node $z$ and at least one of the accesses is a write:
\begin{displaymath}
\begin{array}{l}
\textit{Dependence}_{\sf s,t}(u, v, L_1, L_2, \Cond_1, \Cond_2) \equiv \textit{Configuration}(L_1, \Cond_1, {\sf s}, u) \wedge \textit{Configuration}(L_2, \Cond_2, {\sf t}, v) \\
\quad \land~ \exists z. \Big( \big(\textit{ReadWrite}_{\code{s}}(u, z) \wedge \textit{Write}_{\code{t}}(v, z)\big) \vee \big(\textit{Write}_{\code{s}}(u, z) \wedge \textit{ReadWrite}_{\code{t}}(v, z)\big) \Big)
\end{array}
\end{displaymath}

\subsubsection*{Data Race Detection and Equivalence Checking.}

Now we are ready to encode some common dependence analysis queries to MSO. 
A data race may occur in a \name{} program $P$ if there exist two parallel configurations between which there is data dependence:
\begin{displaymath}
\begin{array}{l}
\textit{DataRace}\llbracket P \rrbracket \equiv \displaystyle \bigvee\limits_{{\sf q_1, q_2} \in \allnoncalls} \exists x_1, x_2, L_1, L_2, \Cond_1, \Cond_2. \Big(  \\
\quad   \textit{Dependence}_{\sf q_1,q_2}(x_1, x_2, L_1, L_2, \Cond_1, \Cond_2) \land \textit{Parallel}(L_1, L_2, \Cond_1, \Cond_2) ~\Big)
\end{array}
\end{displaymath}

\begin{theorem}
A \name{} program $P$ is data-race-free if $\textit{DataRace}\llbracket P \rrbracket$ is invalid.
\end{theorem}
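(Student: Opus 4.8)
The plan is to prove the stated \emph{soundness} direction by contraposition: assuming $P$ admits a data race in some concrete execution, I will exhibit a tree $T$ together with a satisfying assignment for $\textit{DataRace}\llbracket P \rrbracket$, thereby showing the formula holds on at least one tree and is therefore not invalid. The whole argument hinges on the fact that configurations (\defref{configuration}) \emph{overapproximate} the call stacks that actually arise during execution, so the inclusion goes in the convenient direction: every concrete witness of a race lifts to an MSO witness, even though the converse need not hold (speculative execution makes the abstraction incomplete, but completeness is not needed here).

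First I would make the operational notion of a race precise. An execution of $P$ on $T$ induces a sequence of iterations $(\code{s}, u)$, and a race is a pair of iterations $(\code{q}_1, x_1)$, $(\code{q}_2, x_2)$ with $\code{q}_1, \code{q}_2 \in \allnoncalls$ that (i) may be scheduled in either order by the interleaving semantics and (ii) conflict, i.e.\ access a common node with at least one write. Fixing such a race, I record the concrete call stack at each of the two iterations. Each stack is, by construction, a real configuration: the bottom record runs \code{main} on the root, adjacent records satisfy the $\triangleleft$ relation, and---because the stack came from an actual run---each adjacent transition is witnessed by a genuine execution, so by Lemma~\ref{thm:pathcond} it satisfies $\textit{PathCond}_{\code{s,t}}$ and hence the abstracted $\overline{\textit{PathCond}_{\code{s,t}}}$.

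The second step is to encode the two stacks as labelings $(L_1, \Cond_1)$ and $(L_2, \Cond_2)$ on $T$ and verify each conjunct of $\textit{Configuration}(L_i, \Cond_i, \code{q}_i, x_i)$ against the concrete stack: \code{main} sits on the root; $\code{q}_i$ is the unique non-call label, giving $\textit{Current}$; every call record has exactly one successor and one predecessor, yielding the $\textit{Next}$ and $\textit{Prev}$ clauses via $\overline{\textit{PathCond}}$; and the conditions satisfied at each node form a feasible path from the root, hence lie in ${\sf ConsistentCondSet}$. With both configurations in hand, clause (ii) is exactly $\textit{Dependence}_{\code{q}_1, \code{q}_2}$: here I rely on the read/write analysis being a \emph{sound overapproximation} of each block's footprint, so that an actual access to the shared node $z$ implies the corresponding $\textit{Write}$/$\textit{ReadWrite}$ predicate, and the ``at least one write'' condition matches the disjunction inside $\textit{Dependence}$.

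The remaining and, I expect, hardest step is clause (i): translating operational order-independence into the $\textit{Parallel}$ predicate. I would argue that, under statement-level interleaving, two iterations are schedulable in either order \emph{iff} the first point at which their call stacks diverge is a parallel composition $\{\Stmt \parallel \Stmt\}$---one stack descending into the left branch and the other into the right. Concretely, the two configurations agree on all records reachable up to a diverging record $(\code{s}, z, \dots)$, after which their next blocks $\code{t}_1, \code{t}_2$ satisfy $\code{t}_1 \parallel \code{t}_2$; this is precisely $\textit{Consistent}_{\code{s}, \code{t}_1, \code{t}_2}$, and so $\textit{Parallel}$ holds. The delicate points are (a) establishing that the divergence is unique and lands on a parallel node---ruling out the $\code{t}_1 \prec \code{t}_2$ case, which would instead force a fixed order and place the pair under $\textit{Ordered}$ rather than $\textit{Parallel}$; and (b) confirming that $\Cond_1$ and $\Cond_2$ must agree at $z$, so both branches are simultaneously reachable, which is what the $\textit{reach}(v,z)$ clause of $\textit{Consistent}$ demands. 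Combining the $\textit{Dependence}$ and $\textit{Parallel}$ witnesses gives a satisfying assignment for one disjunct of $\textit{DataRace}\llbracket P \rrbracket$ over $T$, contradicting invalidity and completing the contrapositive.
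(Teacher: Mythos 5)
The paper states this theorem without any proof (the appendix proves only Theorem~1), so there is no official argument to compare against; judged on its own terms, your proposal is the right soundness argument and is essentially correct. The contrapositive structure---lift the two concrete call stacks at a racing pair of iterations to labelings satisfying $\textit{Configuration}$, invoke soundness of the read/write sets for $\textit{Dependence}$, and locate the first divergence of the two stacks at a parallel composition for $\textit{Parallel}$---is exactly what the claim needs, and you correctly isolate the genuinely delicate step, namely that order-independence of two iterations under the interleaving semantics forces the divergence to sit under a $\{ \Stmt \parallel \Stmt \}$ node. Three details deserve explicit treatment. First, at the diverging record the successor blocks $\code{t}_1, \code{t}_2$ satisfy exactly one of $\prec$, $\uparrow$, $\parallel$; you rule out $\prec$, but you should also rule out $\uparrow$, which holds because both iterations arise in the \emph{same} execution from the same call record with the same values, so both branches of a conditional cannot be taken. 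Second, the $\textit{Configuration}$ formula constrains $\Cond$ at \emph{every} node of the tree, including nodes carrying no record of the stack; you need to say how those labels are chosen (any member of ${\sf ConsistentCondSet}$ works, since no other conjunct inspects them off the stack path). Third, the theorem says ``invalid'' where the argument really needs ``unsatisfiable over all trees'': your witness shows the closed existential formula is satisfiable on one tree, which is the sensible reading and matches what \Mona{} decides, but the terminology mismatch is worth flagging. None of these is a gap in the mathematical content.
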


\noindent Besides data race detection, another critical query is the equivalence between two \name{} programs, which is common in program optimization. For example, when two sequential tree traversals \code{A(); B()} are fused into a single traversal \code{AB()}, one needs to check if this optimization is valid, i.e., if \code{A(); B()} is equivalent to \code{AB()}. Again, while the equivalence checking is a classical and extremely challenging problem, we focus on comparing programs that are built on the same set of straight-line blocks and simulate each other. The comparison is sufficient since the goal of \name{} framework is to automate the verification of common program transformations such as fusion or parallelization, which only reorder the operations of a program.

\begin{definition}
Two \name{} programs $P$ and $P'$ bisimulate if $\allnoncalls(P) = \allnoncalls(P')$ and there exists a relation $R \subseteq \allcalls(P) \times \allcalls(P')$ such that
\begin{itemize}
\item for any ${\sf s} \in \allcalls(P)$ and ${\sf s'} \in \allcalls(P')$, if ${\sf s} \triangleleft {\sf q}$ and ${\sf s'} \triangleleft {\sf q}$ for some non-call block ${\sf q}$, then $({\sf s}, {\sf s'}) \in R$.
\item if ${\sf s} \triangleleft {\sf t}$, ${\sf s'} \triangleleft {\sf t'}$ and ${(\sf t}, {\sf t'}) \in R$, then ${(\sf s}, {\sf s'}) \in R$.
\item for any node $u, v$ and any ${(\sf s}, {\sf s'}) \in R, {(\sf t}, {\sf t'}) \in R$, $\textit{PathCond}_{\code{s,t}}(u, v, M, N)$ and $\textit{PathCond}_{\code{s',t'}}(u, v, M, N)$  are equivalent.
\end{itemize}
\end{definition}
Intuitively, $P$ and $P'$ bisimulate if any configuration for $P$ can be converted to a corresponding configuration for $P'$, and vice versa.  It is not hard to develop a naive bisimulation-checking algorithm to check if two \name{} programs $P$ and $P'$ bisimulate: just enumerate all possible relations between $P$ calls and $P'$ calls, by brute force. In our experiments, we manually did the enumeration but following some automatable heuristics, e.g., giving priorities to relations that preserve the order of the calls.

The correspondence between configurations can be extended to executions, i.e., every execution of $P$ corresponds to an execution of $P'$ that runs exactly the same blocks of code on the same nodes, and vice versa. To guarantee the equivalence, it suffices to make sure that the correspondence does not swap any pair of ordered configurations with data dependences.~\footnote{We assume both programs are free of data races; otherwise the equivalence between them is undefined.} 
In the following formula, the predicates $\textit{Dependence}^P_{\sf q_1,q_2}$ and $\textit{Dependence}^{P'}_{\sf q_1,q_2}$ guarantee four configurations, two on $P$ and two on $P'$, and pair-wisely bisimulating (as they end with the same blocks).
\begin{displaymath}
\begin{array}{l}
\textit{Conflict}\llbracket P, P' \rrbracket \equiv \displaystyle \bigvee\limits_{{\sf q_1, q_2} \in \allnoncalls} \exists x_1, x_2, L_1, L_2, \Cond_1, \Cond_2, L'_1, L'_2, \Cond'_1, \Cond'_2 . \Big( \\
\quad~ \textit{Dependence}^P_{\sf q_1,q_2}(x_1, x_2, L_1, L_2, \Cond_1, \Cond_2)  ~\land~  \textit{Dependence}^{P'}_{\sf q_1,q_2}(x_1, x_2, L'_1, L'_2, \Cond'_1, \Cond'_2) \\
\quad ~\land~  \textit{Ordered}^P(L_1, L_2, \Cond_1, \Cond_2) ~\land~ \textit{Ordered}^{P'}(L'_2, L'_1, \Cond'_2, \Cond'_1) ~ \Big)
\end{array}
\end{displaymath}

\begin{theorem}
For any two data-race-free \name{} programs $P$ and $P'$ that bisimulate, they are equivalent if $\textit{Conflict}\llbracket P, P' \rrbracket$ is invalid.
\end{theorem}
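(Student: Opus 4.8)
The plan is to reduce the stated equivalence to a classical commutativity (Mazurkiewicz-trace) argument, transporting executions between $P$ and $P'$ along the bisimulation and using data-race-freeness to fix the order of every dependent pair of iterations. First I would make the semantics of ``equivalent'' precise: since each program is data-race-free, any two interleavings of a run on a fixed input tree $T$ differ only by transpositions of independent iterations and therefore reach the same final heap, so each program induces a well-defined state transformer $\mathit{state}_P$ and $\mathit{state}_{P'}$ on trees, and proving equivalence reduces to proving $\mathit{state}_P(T) = \mathit{state}_{P'}(T)$ for every $T$. A crucial observation to record here is that configurations \emph{over-approximate} the configurations that actually arise in executions (Definition~\ref{def:configuration} and the speculative-execution abstraction), so every real iteration, and in particular every real dependent-and-ordered pair, is witnessed at the configuration level; consequently the invalidity of the purely configuration-level predicate $\textit{Conflict}\llbracket P, P' \rrbracket$ is on the safe side and rules out every \emph{real} order reversal.

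Next I would use the bisimulation to show that $P$ and $P'$ execute exactly the same set $\mathcal{I}$ of iterations on $T$. The first two clauses of the bisimulation definition build, bottom-up from the shared non-call blocks (recall $\allnoncalls(P) = \allnoncalls(P')$), a correspondence $R$ between the call structures of the two programs, and the third clause guarantees that $R$-related calls induce equivalent path conditions $\textit{PathCond}$. An induction on the call tree then shows that $P$ and $P'$ take the same branching decisions on the same nodes, so a non-call iteration $({\sf q}, u)$ fires in $P$ if and only if it fires in $P'$. By the no-self-call restriction each such iteration occurs at most once, so both runs operate on the common set $\mathcal{I}$, and the induced map on valid configurations is a bijection that preserves the terminal record.

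The heart of the argument is to show that $P$ and $P'$ agree on the execution order of every data-dependent pair in $\mathcal{I}$. Take $I, J \in \mathcal{I}$ whose read/write footprints conflict (captured by $\textit{Dependence}$). In $P$ they cannot be $\textit{Parallel}$, for otherwise their configurations would satisfy $\textit{DataRace}\llbracket P \rrbracket$, contradicting data-race-freeness; hence they are $\textit{Ordered}$, say $I$ before $J$, and the same reasoning forces them to be $\textit{Ordered}$ in $P'$. Were $P'$ to order them in the opposite direction, the two $R$-related pairs of configurations would witness exactly the conjunction $\textit{Ordered}^P(L_1, L_2, \Cond_1, \Cond_2) \land \textit{Ordered}^{P'}(L'_2, L'_1, \Cond'_2, \Cond'_1)$ defining $\textit{Conflict}\llbracket P, P' \rrbracket$, contradicting its invalidity. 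Thus $P'$ also runs $I$ before $J$, and the two programs impose the same order on all dependent pairs. To conclude, I would invoke the standard trace-theoretic lemma: two linearizations of the common set $\mathcal{I}$ that respect the same dependence order are connected by finitely many transpositions of adjacent \emph{independent} iterations, each of which leaves the final state unchanged because independent iterations have disjoint writes and non-conflicting reads. Rewriting any concrete interleaving of $P$ into one of $P'$ this way yields $\mathit{state}_P(T) = \mathit{state}_{P'}(T)$.

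The step I expect to be the main obstacle is the faithfulness bridge between the static configuration/MSO layer and the concrete operational semantics. I must prove that the predicates $\textit{Ordered}$ and $\textit{Parallel}$ (built from the divergence analysis in $\textit{Consistent}$) genuinely reflect the happens-before and may-reorder relations of real interleavings, that the bisimulation yields a bijection on \emph{executed} iterations and not merely on valid configurations, and that the read/write sets $R_{\sf s}$ and $W_{\sf s}$ soundly over-approximate the actual footprint of each block --- the last point being precisely where the single-node-traversal and no-tree-mutation restrictions are indispensable, since they keep each block's footprint local and prevent aliasing from breaking the commutation of independent iterations. Once these soundness lemmas are established, the commutativity argument and the appeal to the invalidity of $\textit{Conflict}$ are routine.
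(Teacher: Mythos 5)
Your proposal is correct and follows essentially the same route the paper takes: the paper's justification is exactly the informal paragraph preceding the theorem (bisimulation lifts the configuration correspondence to executions running the same blocks on the same nodes, and equivalence follows once no ordered dependent pair is swapped, which is what the invalidity of $\textit{Conflict}\llbracket P, P' \rrbracket$ guarantees), and your Mazurkiewicz-trace commutation step plus the observation that configurations over-approximate real executions (so the check is sound) simply make that sketch precise. The soundness obligations you flag at the end are real but are likewise left implicit in the paper, so nothing in your argument diverges from or falls short of the paper's own treatment.
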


\section{Evaluation}
\label{sec:case}

We have prototyped the \name{} framework \footnote{Available at: https://github.rcac.purdue.edu/wang3204/Retreet} and evaluated the effectiveness and efficiency of the framework through four case studies: a mutually recursive size-counting traversal, a tree-mutating traversal, a set of CSS minification traversals, and a cycletree traversal algorithm. Note that both of the mutually recursive size-counting traversal and the cycletree traversal algorithm can not be handled by any existing approaches. For each case study, we verify the validity of some optimizations (parallelizing a traversal and/or fusing multiple traversals) using the MSO encoding approach set forth above.
Our framework leverages \Mona~\cite{mona}, a state-of-the-art WS2S (weak MSO with two successors) logic solver as our back-end constraint solver. All experiments were run on a server with a 40-core, 2.2GHz CPU and 128GB memory running Fedora 26. Remember our MSO encodings of data-race-freeness and equivalence are sound but not complete, the negative answers could be spurious. To this end, whenever \Mona returned a counterexample, we manually investigated if it corresponds to a real evidence of violation.

\vspace{-0.2cm}
\subsubsection*{Mutually Recursive Size-Counting.}
%

\begin{figure}[t!b]

\begin{subfigure}[b]{0.5\columnwidth}
\begin{lstlisting}[basicstyle=\sffamily \scriptsize]
  Fused(n)
    if (n == nil) return 0
    else
    	(ls, lv) = Fused(n.l)
    	(rs, rv) = Fused(n.r)
    	return (ls + rs + 1, lv + rv)
\end{lstlisting}
%
\caption{A valid fusion}
\label{fig:fusiblesize}
\end{subfigure}
\begin{subfigure}[b]{0.5\columnwidth}
\begin{lstlisting}[basicstyle=\sffamily \scriptsize]
  Fused(n)
    if (n == nil) return 0
    else
    	(ret1, ret2) = (ls + rs + 1, lv + rv)
    	(ls, lv) = Fused(n.l)
    	(rs, rv) = Fused(n.r)
    	return (ret1, ret2)
\end{lstlisting}
%
\caption{An invalid fusion}
\label{fig:infusiblesize}
\end{subfigure}
\caption{Fusing the two mutually recursive traversals in the running example}
\label{fig:fusedsize}
\end{figure}

This is our running example presented in Figure~\ref{fig:number}. We verified that the mutually recursive traversals \code{Odd} and \code{Even} can be fused to a single traversal shown in Figure~\ref{fig:fusiblesize} (solved by {\sc Mona} in 0.14s). This simple verification task, to our knowledge, is already beyond the capability of existing approaches. 
We also designed an invalid fused traversal (shown in Figure~\ref{fig:infusiblesize}) and encode the fusibility to MSO. {\sc Mona} returned a counterexample in 0.14s that illustrates how the data dependence is violated. Basically, the read-after-write dependence between a child and its parent in traversal \code{Even} is violated after the fusion. We manually verified that the counterexample is a true positive.

We also checked the data-race-freeness of the original program. The two parallel traversals \code{Odd(n)} and \code{Even(n)} in the \code{main} function  are independent because in every layer of the tree there is exactly one \code{Odd} call and one \code{Even} call and they belong to different traversal on each layer of the tree. The data-race-freeness was checked in 0.02s.

\vspace{-0.2cm}
\subsubsection*{Tree-Mutation.}

\begin{figure}[t!b]
\begin{subfigure}[c]{0.62\columnwidth}
\begin{subfigure}[c]{0.4\columnwidth}
\begin{lstlisting}[basicstyle=\sffamily \scriptsize]
  Swap(n)
    if (n == nil) return
    else
    	Swap(n.l)
    	Swap(n.r)
    	tmp = n.l
    	n.l = n.r
    	n.r = tmp
\end{lstlisting}
\end{subfigure}
\begin{subfigure}[c]{0.4\columnwidth}
\begin{lstlisting}[basicstyle=\sffamily \scriptsize]
  IncrmLeft(n)
    if (n == nil) return
    else
    	IncrmLeft(n.l)
    	IncrmLeft(n.r)
    	if (n == nil) n.v = 1
    	else n.v = n.l.v + 1
  Main(n)
    Swap(n)
    IncrmLeft(n)
\end{lstlisting}
\end{subfigure}
\caption{Before fusion}
\label{fig:unfusedswap}
\end{subfigure}
\hfill
\begin{subfigure}[c]{0.37\columnwidth}
\begin{lstlisting}[basicstyle=\sffamily \scriptsize]
  Fused(n)
    if (n == nil) return
    else
    	Fused(n.l)
    	Fused(n.r)
    	tmp = n.l
    	n.l = n.r
    	n.r = tmp
    	if (n == nil) n.v = 1
    	else n.v = n.l.v + 1
\end{lstlisting}
\caption{After fusion}
\label{fig:fusedswap}
\end{subfigure}
\caption{Fusing tree mutation traversal}
\label{fig:swap}
\end{figure}

%
%

We checked the fusion of two tree-mutating traversals. Figure~\ref{fig:unfusedswap} shows the two original traversals: \code{Swap} is a tree-mutating traversal that recursively swaps the sibling nodes of a binary tree; \code{IncrmLeft} updates the local field \code{n.v} depending on the value stored in its left child. Figure~\ref{fig:fusedswap} shows the fused traversal. 

Notice that \code{Swap} mutates the tree topology, which is disallowed in \name{}.
However, as mentioned in Section~\ref{sec:syntax}, mutation operations can be simulated using several mutable local fields. 
For example, for the statement \code{n.l = n.r}, we introduced two local boolean fields, \code{n.ll} for ``\code{n.l} is unchanged'' and \code{n.lr} for ``\code{n.l} is pointing to the original right child of \code{n}''. Before any manipulation, the two fields are initialized as \code{n.ll = true; n.lr = false}; and the statement \code{n.l = n.r} can be replaced with \code{n.lr = true; n.ll = false}. Then any other statement reading \code{n.l} will be converted to a conditional statement. For example, a call \code{f(n.l)} is converted to \code{if (n.ll) f(n.l) else if (n.lr) f(n.r)}.

After the conversion, we also rewrote the original and fused programs to eliminate some conditional branches using information extracted by a simple program analysis. For example, after swapping the siblings of \code{n}, \code{n.lr} is currently true at the program point, then \code{if (n.ll) IncrmLeft(n.l) else if (n.lr) IncrmLeft(n.r)} can be simplified as \code{IncrmLeft(n.r)}.
After these preprocessing steps, we obtained standard \name{} programs and the fusibility was checked by {\sc Mona} in 0.12s.

\vspace{-0.2cm}
\subsubsection*{CSS Minification.}

\begin{figure}[t!b]
\begin{subfigure}[c]{0.54\columnwidth}
\begin{lstlisting}[basicstyle=\sffamily \scriptsize]
  ConvertValues(n)
    if (n == nil) return 0
    else
    	for each child p: ConvertValues(n.p)
    	if (n.type == "word" || n.type == "func")
				n.value = TransValue(n.value)
  MinifyFont(n)
    if (n == nil) return 0
    else
    	for each child p: MinifyFont(n.p)
    	if (n.prop == "font-weight")
				n.value = MinifyWeight(n.value)
\end{lstlisting}
\end{subfigure}
\begin{subfigure}[c]{0.46\columnwidth}
\begin{lstlisting}[basicstyle=\sffamily \scriptsize]
  ReduceInit(n)
    if (n == nil) return 0
    else
    	for each child p: ReduceInit(n.p)
    	if (length(n.value) < initialLength)
				n.value = ReduceInitial(n.value)
  Main(n)
    ConvertValues(n)
    MinifyFont(n)
    ReduceInit(n)
\end{lstlisting}
\end{subfigure}
\caption{CSS minification traversals}
\label{fig:css}
\end{figure}



Cascading Style Sheets (CSS) is a widely-used style sheet language for web pages. In order to lessen the page loading time, many minification techniques are adapted to reduce the size of CSS document so that the time spent on delivering CSS document can be reduced~\cite{cssmin,cssnano,minify,csso,clean-css}. When minifying the CSS file, the Abstract Syntax Tree (AST) of the CSS code is traversed several times to perform different kinds of minifications, such as shortening identifiers, reducing whitespaces, etc. In the case that the same AST is traversed multiple times, fusing the traversals together would be desirable to enhance the performance of minification process.

Hence, we consider checking the fusibility of three CSS minification traversals shown in Figure~\ref{fig:css}. These traversals are similar to the ones presented in \cite{cssnano}. Traversal \code{ConvertValues} converts values to use different units when conversion result in smaller CSS size. For instance, \code{100ms} will be represented as \code{.1s}. Traversal \code{MinifyFont} will try to minimize the font weight in the code. For example, \code{font-weight: normal} will be rewritten to \code{font-weight: 400}. Traversal \code{ReduceInit} reduces the CSS size by converting the keyword \code{initial} to corresponding value when keyword \code{initial} is longer than the property value. For example, \code{min-width: initial} will be converted to \code{min-width: 0}.

Notice that these programs involve conditions on string which are not supported by \name{}. Nonetheless, since the traversals in Figure~\ref{fig:css} only manipulate the local fields of the AST, these conditions can be replaced by some simple arithmetic conditions. 
Moreover, as the ASTs of CSS programs are typically not binary trees and cannot be handled by {\sc Mona} directly, we also converted the ASTs to left-child right-sibling binary trees and then simplify the traversals to match \name{} syntax. The fusibility of the three minification traversals were checked in 6.88s.

\subsubsection*{Cycletree Routing.}

\begin{figure}[t!b]
\begin{subfigure}[c]{0.33\columnwidth}
\begin{lstlisting}[basicstyle=\sffamily \scriptsize]
  RootMode(n, number)
    if (n == nil) return
    else
    	n.num = number
    	number = number+1
    	PreMode(n.l, number)
    	PostMode(n.r, number)
  PreMode(n, number)
    if (n == nil) return
    else
    	n.num = number
    	number = number+1
    	PreMode(n.l, number)
    	InMode(n.r, number)
\end{lstlisting}
\end{subfigure}
\begin{subfigure}[c]{0.33\columnwidth}
\begin{lstlisting}[basicstyle=\sffamily \scriptsize]
  InMode(n, number)
    if (n == nil) return
    else
    	PostMode(n.l, number)
    	n.num = number
    	number = number+1
    	PreMode(n.r, number)
  PostMode(n, number)
    if (n == nil) return
    else
    	InMode(n.l, number)
    	PostMode(n.r, number)
    	n.num = number
    	number = number+1
\end{lstlisting}
\end{subfigure}
\begin{subfigure}[c]{0.32\columnwidth}
\begin{lstlisting}[basicstyle=\sffamily \scriptsize]
  ComputeRouting(n)
    if (n == nil) return
    else
    	ComputeRouting(n.l)
    	ComputeRouting(n.r)
    	n.lmin = n.l.min
    	n.rmin = n.r.min
    	n.lmax = n.l.max
    	n.rmax = n.r.max
    	n.max = MAX(n.lmax, n.rmax, n.num)
    	n.min = MIN(n.lmin, n.rmin, n.num)
  Main(n)
    RootMode(n, 0)
    ComputeRouting(n)
\end{lstlisting}
\end{subfigure}
\caption{Ordered cycletree construction and routing data computation}
\label{fig:cycletree}
\end{figure}

Our last and most challenging case study is about Cycletrees~\cite{cycletree}, a special class of binary trees with additional set of edges. These additional edges serve the purpose of constructing a Hamiltonian cycle. 
Hence, cycletrees are especially useful when it comes to different communication patterns in parallel and distributed computation. For instance, a broadcast can be efficiently processed by the tree structure while the cycle order is suitable for point-to-point communication. Cycletrees are proven to be an efficient network topology in terms of degree and number of communication links \cite{cycletree3,cycletree,cycletree2}.

Figure~\ref{fig:cycletree} shows the code snippet of two traversals over a cycletree. \code{RootMode} is a mutually recursive traversal that construct the cyclic order on a binary tree to transform the binary tree to a cycletree. \code{n.num} stores the order of current node {n} in the cyclic order of cycletree. \code{ComputeRouting} computes the router data of each node. The router data \code{n.lmin, n.rmin, n.lmax, n.rmax} are essential for an efficient cycletree routing algorithm that presented in~\cite{cycletree}. In the event of cyclic order traversal and routing had to be performed repeatedly---in case of link failures---it would be useful to think about ways we can optimize these procedures by fusion or parallelization.

We first consider checking the fusibility of these two traversals \code{RootMode} and \code{ComputeRouting}. 
We omit the fused traversal in the interest of space.
The total time spent to verify the fusibility of these two traversals was 490.55s.

We then considered whether the two traversals can run in parallel. This time \Mona spent 0.95s and returned a counterexample which allows us to discover a data race. Essentially, the counterexample illustrates that the read-after-write dependence over \code{n.num} between \code{PostMode} and \code{ComputeRouting} may be violated by parallelization. We manually check and verify that the counterexample is indeed a true positive.

\section{Related Work}
\label{sec:related}


There has been much prior work on program dependence analysis for tree data structures. Using shape analyses~\cite{shape82}, Ghiya~\etal~\cite{Ghiya1998} detect function calls that access disjoint subtrees for parallel computation in programs with recursive data structures. Rugina and Rinard~\cite{Rugina2000} extract symbolic lower and upper bounds for the regions of memory that a program accesses. Instead of providing a framework that describe dependences in programs, these work emphasizes only on detecting the data races and the potential of parallel computing so that is not able to handle fusion or other transformations.

Amiranoff~\etal~\cite{Amiranoff} propose instance-wise analysis to perform dependence analysis for recursive programs involving trees. This framework represents each dynamic instance of a statement by an execution trace, and then abstract the execution trace to a finitely-presented control word. Nonetheless, the framework does not support applications other than parallelization and they can not handle programs with tree mutation. Weijiang~\etal~\cite{Weijiang2015} also present a tree dependence analysis framework that reason legality of point blocking, traversal slicing and parallelization with the assumption that all traversals are identical preorder traversals. Their framework allows restricted tree mutations including nullifying or creating a subtree but the traversals that they consider are also single node traversals like \name{}. Deforestation~\cite{Wadler1990,Gill1993,MartiNez2013,Rompf2013,DAntoni2014} is a technique widely applied to fusion, but it either does not support fusion over arbitrary tree traversals, or does not handle reasoning about imperative programs.

The last decade has seen significant efforts on reasoning transformations over recursive tree traversals. Meyerovich~\etal~\cite{Meyerovich2010,Meyerovich2013} focus on fusing tree traversals over ASTs of CSS files. They specify tree traversals as attribute grammars and present a synthesizer that automatically fuse and parallelize the attribute grammars. Their framework only support traversals that can be written as attribute grammars, basically layout traversals. Rajbhandari~\etal~\cite{rajbhandari2016sc} provide a domain specific fusion compiler that fuse traversals of $k$-d trees in computational simulations. Both the frameworks are ad hoc, designed to serve specific applications. The tree traversals they can handle are less general than \name{}.

Most recently, TreeFuser~\cite{Sakka2017} presented by Sakka~\etal is an automatic framework that fuses tree traversals written in a general language. TreeFuser supports code motion and partial fusion, i.e., part of a traversal (left subtree or right subtree) can be fused together when possible, even if the traversals can not be fully fused. Their approach can not handle transformations other than fusion. In other words, parallelization of traversals is beyond the scope of TreeFuser. Besides, TreeFuser also suffers from the restrictions that \name{} have, i.e. no tree mutation and single node traversal. PolyRec~\cite{polyrec} is a framework that can handle schedule transformations for nested recursive programs only. PolyRec targets a limited class of tree traversals, called perfectly nested recursive programs, hence the framework is not able to handle arbitrary recursive tree traversals. Also PolyRec does not handle dependence analysis and suffers from the restriction that no tree mutation is allowed. The transformations that they handle are interchange, inlining and code motion rather than fusion and parallelization. None of the dependence analysis in the frameworks above is expressive enough to handle mutual recursion.

%

%
%
%
%

%
%
%

\bibliographystyle{splncs04}
\bibliography{refs,refs2}

\begin{thebibliography}{10}
\providecommand{\url}[1]{\texttt{#1}}
\providecommand{\urlprefix}{URL }
\providecommand{\doi}[1]{https://doi.org/#1}

\bibitem{Amiranoff}
Amiranoff, P., Cohen, A., Feautrier, P.: Beyond iteration vectors: Instancewise
  relational abstract domains. In: Yi, K. (ed.) Static Analysis. pp. 161--180.
  Springer Berlin Heidelberg, Berlin, Heidelberg (2006)

\bibitem{cssmin}
Bleuzen, J.: cssmin, \url{https://www.npmjs.com/package/cssmin}

\bibitem{cssnano}
Briggs, B., Contributers: cssnano, \url{https://cssnano.co/}

\bibitem{minify}
Clay, S., Contributers: minify, \url{https://github.com/mrclay/minify}

\bibitem{DAntoni2014}
D'Antoni, L., Veanes, M., Livshits, B., Molnar, D.: Fast: A transducer-based
  language for tree manipulation. SIGPLAN Not.  \textbf{49}(6),  384--394 (jun
  2014)

\bibitem{csso}
Dvornov, R., Contributers: csso, \url{https://github.com/css/csso}

\bibitem{mona}
Elgaard, J., Klarlund, N., M{\o}ller, A.: {MONA} 1.x: new techniques for {WS1S}
  and {WS2S}. In: Proc. 10th International Conference on Computer-Aided
  Verification, CAV~'98. LNCS, vol.~1427, pp. 516--520. Springer-Verlag
  (June/July 1998)

\bibitem{Ghiya1998}
Ghiya, R., Hendren, L.J., Zhu, Y.: Detecting parallelism in c programs with
  recursive darta structures. In: Proceedings of the 7th International
  Conference on Compiler Construction. pp. 159--173. CC '98, Springer-Verlag,
  London, UK, UK (1998), \url{http://dl.acm.org/citation.cfm?id=647474.727598}

\bibitem{Gill1993}
Gill, A., Launchbury, J., Peyton~Jones, S.L.: A short cut to deforestation. In:
  Proceedings of the Conference on Functional Programming Languages and
  Computer Architecture. pp. 223--232. FPCA '93, ACM, New York, NY, USA (1993).
  \doi{10.1145/165180.165214}, \url{http://doi.acm.org/10.1145/165180.165214}

\bibitem{jo11oopsla}
Jo, Y., Kulkarni, M.: Enhancing locality for recursive traversals of recursive
  structures. In: Proceedings of the 2011 ACM international conference on
  Object oriented programming systems languages and applications. pp. 463--482.
  OOPSLA '11, ACM, New York, NY, USA (2011).
  \doi{http://doi.acm.org/10.1145/2048066.2048104},
  \url{http://doi.acm.org/10.1145/2048066.2048104}

\bibitem{jo12oopsla}
Jo, Y., Kulkarni, M.: Automatically enhancing locality for tree traversals with
  traversal splicing. In: Proceedings of the 2012 ACM international conference
  on Object oriented programming systems languages and applications. OOPSLA
  '12, ACM, New York, NY, USA (2012)

\bibitem{shape82}
Jones, N.D., Muchnick, S.S.: A flexible approach to interprocedural data flow
  analysis and programs with recursive data structures. In: Proceedings of the
  9th ACM SIGPLAN-SIGACT Symposium on Principles of Programming Languages. pp.
  66--74. POPL '82, ACM, New York, NY, USA (1982). \doi{10.1145/582153.582161},
  \url{http://doi.acm.org/10.1145/582153.582161}

\bibitem{MartiNez2013}
Mart\'{\i}Nez, M., Pardo, A.: A shortcut fusion approach to accumulations. Sci.
  Comput. Program.  \textbf{78}(8),  1121--1136 (Aug 2013).
  \doi{10.1016/j.scico.2012.09.002},
  \url{http://dx.doi.org/10.1016/j.scico.2012.09.002}

\bibitem{Meyerovich2010}
Meyerovich, L., Bodik, R.: Fast and parallel webpage layout. In: Proceedings of
  the 19th international conference on world wide web. pp. 711--720. WWW '10,
  ACM (2010)

\bibitem{Meyerovich2013}
Meyerovich, L.A., Torok, M.E., Atkinson, E., Bodik, R.: Parallel schedule
  synthesis for attribute grammars. In: Proceedings of the 18th ACM SIGPLAN
  Symposium on Principles and Practice of Parallel Programming. pp. 187--196.
  PPoPP '13, ACM, New York, NY, USA (2013). \doi{10.1145/2442516.2442535},
  \url{http://doi.acm.org/10.1145/2442516.2442535}

\bibitem{Minsky1967}
Minsky, M.L.: Computation: Finite and Infinite Machines. Prentice-Hall, Inc.,
  Upper Saddle River, NJ, USA (1967)

\bibitem{clean-css}
Pawlowicz, J.: clean-css, \url{https://github.com/jakubpawlowicz/clean-css}

\bibitem{petrashko17}
Petrashko, D., Lhot\'{a}k, O., Odersky, M.: Miniphases: Compilation using
  modular and efficient tree transformations. In: Proceedings of the 38th ACM
  SIGPLAN Conference on Programming Language Design and Implementation. pp.
  201--216. PLDI 2017, ACM, New York, NY, USA (2017).
  \doi{10.1145/3062341.3062346},
  \url{http://doi.acm.org/10.1145/3062341.3062346}

\bibitem{rajbhandari2016sc}
Rajbhandari, S., Kim, J., Krishnamoorthy, S., Pouchet, L.N., Rastello, F.,
  Harrison, R.J., Sadayappan, P.: A domain-specific compiler for a parallel
  multiresolution adaptive numerical simulation environment. In: Proceedings of
  the International Conference for High Performance Computing, Networking,
  Storage and Analysis. pp. 40:1--40:12. SC '16, IEEE Press, Piscataway, NJ,
  USA (2016), \url{http://dl.acm.org/citation.cfm?id=3014904.3014958}

\bibitem{rajbhandari2016fusing}
Rajbhandari, S., Kim, J., Krishnamoorthy, S., Pouchet, L.N., Rastello, F.,
  Harrison, R.J., Sadayappan, P.: On fusing recursive traversals of kd trees.
  In: Proceedings of the 25th International Conference on Compiler
  Construction. pp. 152--162. ACM (2016)

\bibitem{Rompf2013}
Rompf, T., Sujeeth, A.K., Amin, N., Brown, K.J., Jovanovic, V., Lee, H.,
  Jonnalagedda, M., Olukotun, K., Odersky, M.: Optimizing data structures in
  high-level programs: New directions for extensible compilers based on
  staging. In: Proceedings of the 40th Annual ACM SIGPLAN-SIGACT Symposium on
  Principles of Programming Languages. pp. 497--510. POPL '13, ACM, New York,
  NY, USA (2013). \doi{10.1145/2429069.2429128},
  \url{http://doi.acm.org/10.1145/2429069.2429128}

\bibitem{Rugina2000}
Rugina, R., Rinard, M.: Symbolic bounds analysis of pointers, array indices,
  and accessed memory regions. In: Proceedings of the ACM SIGPLAN 2000
  Conference on Programming Language Design and Implementation. pp. 182--195.
  PLDI '00, ACM, New York, NY, USA (2000). \doi{10.1145/349299.349325},
  \url{http://doi.acm.org/10.1145/349299.349325}

\bibitem{Sakka2017}
Sakka, L., Sundararajah, K., Kulkarni, M.: Treefuser: A framework for analyzing
  and fusing general recursive tree traversals. Proc. ACM Program. Lang.
  \textbf{1}(OOPSLA),  76:1--76:30 (Oct 2017). \doi{10.1145/3133900},
  \url{http://doi.acm.org/10.1145/3133900}

\bibitem{polyrec}
Sundararajah, K., Kulkarni, M.: Composable, sound transformations of nested
  recursion and loops. In: Proceedings of the 40th ACM SIGPLAN Conference on
  Programming Language Design and Implementation. pp. 902--917. PLDI 2019, ACM,
  New York, NY, USA (2019). \doi{10.1145/3314221.3314592},
  \url{http://doi.acm.org/10.1145/3314221.3314592}

\bibitem{cycletree3}
Veanes, M., Barklund, J.: Construction of natural cycletrees. Inf. Process.
  Lett.  \textbf{60}(6),  313--318 (1996). \doi{10.1016/S0020-0190(96)00179-2},
  \url{https://doi.org/10.1016/S0020-0190(96)00179-2}

\bibitem{cycletree}
Veanes, M., Barklund, J.: Natural cycletrees: Flexible interconnection graphs.
  J. Parallel Distrib. Comput.  \textbf{33},  44--54 (02 1996).
  \doi{10.1006/jpdc.1996.0023}

\bibitem{cycletree2}
Veanes, M., Barklund, J.: On the number of edges in cycletrees. Inf. Process.
  Lett.  \textbf{57}(4),  225--229 (1996). \doi{10.1016/0020-0190(95)00183-2},
  \url{https://doi.org/10.1016/0020-0190(95)00183-2}

\bibitem{Wadler1990}
Wadler, P.: Deforestation: transforming programs to eliminate trees.
  Theoretical Computer Science  \textbf{73}(2),  231 -- 248 (1990).
  \doi{https://doi.org/10.1016/0304-3975(90)90147-A},
  \url{http://www.sciencedirect.com/science/article/pii/030439759090147A}

\bibitem{extended}
Wang, Y., Liu, J., Zhang, D., Qiu, X.: Reasoning about recursive tree
  traversals (2019), \url{https://arxiv.org/abs/1910.09521}

\bibitem{Weijiang2015}
Weijiang, Y., Balakrishna, S., Liu, J., Kulkarni, M.: Tree dependence analysis.
  In: Proceedings of the 36th ACM SIGPLAN Conference on Programming Language
  Design and Implementation. pp. 314--325. PLDI '15, ACM, New York, NY, USA
  (2015). \doi{10.1145/2737924.2737972},
  \url{http://doi.acm.org/10.1145/2737924.2737972}

\end{thebibliography}

\newpage
\appendix

\section{Proof of Theorem~1}
\label{sec:proof}
\begin{proof}
We prove the undecidability through a reduction from the halting problem of 2-counter machines~\cite{Minsky1967}. We can build a \name{} program to simulate the execution of a 2-counter machine. Given a 2-counter machine $M$, every line of non-halt instruction $c$ in $M$ can be converted to a function in a \name{} program. The function is of the form $f_c({\sf n, v_1, v_2})$: \code{n} is a \code{Loc} parameter and ${\sf v_1, v_2}$ are \code{Int} parameters. It treats ${\sf v_1, v_2}$ as the current values of the two counters, updates the two counter values to ${\sf u_1, u_2}$ by simulating the execution of $c$, then recursively calls $f_{c'}({\sf n.l})$ if $c'$ the next instruction. for the halt instruction, a special function $f_{\textit{halt}}$ will pass up the signal by recursive calls, and finally run a special line of code ${\sf s}$ on the root. Then $M$ halts if and only if the iteration $({\sf s}, \textit{root})$ occurs.
\end{proof}

\section{Code Blocks}
\label{sec:block}

We introduce some necessary notations for blocks, of which the meaning is determined by the syntactic structure of the program. Figure~\ref{fig:sets} lists common sets of functions, blocks, parameters and nodes that will be frequently used in this paper.
We then define the possible relations between blocks. Figure~\ref{fig:relations} shows all the possible relations.
Every function's body can be represented as a syntax tree whose leaves are statement blocks and non-leaf nodes are sequentials, conditionals or parallels. Then the relation between two statement blocks is determined by their positions in the syntax tree. In particular, when two blocks ${\sf s} \sim {\sf t}$ belong to the same function ${\sf f}$, there are three possible relations, determined by the least common ancestor (LCA) node of ${\sf s}$ and ${\sf t}$ that is a sequential, conditional or parallel. 

\begin{figure}[t!]
\centering
\begin{tabular}{|c|l|}
\hline
 $\allfuncs$ & the set of all functions \\ \hline
 $\allparams$ & the set of all ${\sf Int}$ function parameters \\ \hline
 $\allblocks$ & the set of all blocks \\ \hline
 $\allcalls$ & the set of all blocks for function calls \\ \hline
 $\allnoncalls$ & the set of all blocks for straight-line non-call assignments \\ \hline
 $\blocks({\sf f})$ & the set of all blocks belonging to a function ${\sf f}$ \\ \hline
 $\params({\sf f})$ & the set of ${\sf Int}$ parameters for ${\sf f}$ \\ \hline
 $~~~{\sf Nodes}(T)~~~$ & the set of all nodes in the tree $T$ \\ \hline
 $\code{Path}(\code{t})$ & the path to \code{t} from the entry point of the function that \code{t} belongs to \\ \hline
\end{tabular}
\caption{Commonly Used Notations}
\label{fig:sets}
\end{figure}


\begin{example}
In our running example (Figure~\ref{fig:number}), there are 11 blocks. We number the blocks with \code{s0} through \code{s10}, as shown in the comment following each block. There are six call blocks: $\allcalls = \{ {\sf s1, s2, s5, s6, s8, s9} \}$; and five non-call blocks: $\allnoncalls = \{ {\sf s0, s3, s4, s7, s10} \}$. 
Take ${\sf s6}$ for example, $\code{Path}(\code{s6})$ is just the path from the beginning of function ${\sf Even}$ (which ${\sf s6}$ belongs to) to ${\sf s6}$, i.e., from $\neg~{\sf c1}$ to ${\sf s5}$ then ${\sf s6}$.
The $\sim$ relation holds between any two blocks from the same group: ${\sf s0}$ through ${\sf s3}$, ${\sf s4}$ through ${\sf s7}$, or ${\sf s8}$ through ${\sf s10}$.  ${\sf s2} \triangleleft {\sf s7}$ because ${\sf s2}$ calls ${\sf Even}$ and ${\sf s7} \in \blocks({\sf Even})$; ${\sf s5} \prec {\sf s7}$ because ${\sf s5}$ precedes ${\sf s7}$; ${\sf s0} \uparrow {\sf s1}$ because ${\sf s0}$ belongs to the if-branch and ${\sf s1}$ belongs to the else-branch; ${\sf s8} \parallel {\sf s9}$ because they are running in parallel.
\end{example}

\begin{figure}[t!]
\centering
\begin{tabular}{|c|l|}
\hline
 $\code{LCA(s, t)}$ & The least common ancestor (LCA) of blocks ${\sf s}$ and ${\sf t}$ in the syntax tree. \\ \hline
 $\code{s} \triangleleft \code{t}$ & $\code{s}$ is a function call to ${\sf f}$ and $\code{t} \in \blocks({\sf f})$. \\ \hline
 $\code{s} \sim \code{t}$ & $\code{s}$ and $\code{t}$ are from the same function definition, i.e., ${\sf s}, {\sf t} \in \blocks({\sf f})$ \\
 & for some function ${\sf f}$. \\ \hline
 ${\sf s} \prec {\sf t}$ & $\code{LCA(s, t)}$ is a sequential, i.e., ${\sf s}$ precedes ${\sf t}$. \\ \hline
 ${\sf s} \uparrow {\sf t}$ & $\code{LCA(s, t)}$ is a conditional, i.e., there is a conditional \code{if (...) then A else B} \\
 & such that ${\sf s}$ and ${\sf t}$ belong to \code{A} and \code{B}, respectively. \\ \hline
 $~~~~~{\sf s} \parallel {\sf t}~~~~~$ & $\code{LCA(s, t)}$ is a parallel, i.e., ${\sf s}$ and ${\sf t}$ can be executed in arbitrary order. \\ \hline
\end{tabular}
\caption{Relations Between Blocks}
\label{fig:relations}
\end{figure}

\begin{lemma}
For any two statement blocks ${\sf s}$ and ${\sf t}$, ${\sf s} \sim {\sf t}$ if and only if exactly one of the following relations holds: ${\sf s} \prec {\sf t}$, ${\sf s} \uparrow {\sf t}$ and ${\sf s} \parallel {\sf t}$.
\end{lemma}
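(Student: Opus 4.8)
The plan is to reduce the entire statement to a single structural fact about the syntax tree of a function body: by the grammar of Figure~\ref{fig:syntax}, the body of each function is a finite tree whose leaves are exactly the statement blocks and whose every internal node is generated by exactly one of the three non-leaf productions for $\Stmt$ — sequential ($\Stmt;\Stmt$), conditional ($\sif{\cdot}{\Stmt}{\Stmt}$), or parallel ($\{\Stmt \parallel \Stmt\}$). Distinct functions yield node-disjoint syntax trees, so two blocks lie in a common tree iff they satisfy ${\sf s} \sim {\sf t}$. I would first record this observation together with the two elementary tree facts I will lean on: any two nodes of a single connected tree have a well-defined LCA, and the LCA of two \emph{distinct} leaves is necessarily an internal node, since a leaf cannot be a proper ancestor of another leaf.

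For the backward direction I would observe that each of $\prec$, $\uparrow$, $\parallel$ is \emph{by definition} a statement about the label of $\code{LCA}({\sf s},{\sf t})$, and that $\code{LCA}$ is defined only when ${\sf s}$ and ${\sf t}$ inhabit the same syntax tree. Hence if any one of the three relations holds, ${\sf s}$ and ${\sf t}$ belong to the same function body, giving ${\sf s} \sim {\sf t}$. This already establishes ``at least one relation $\Rightarrow$ same function,'' which is all the backward direction of the biconditional requires.

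For the forward direction I would assume ${\sf s} \sim {\sf t}$ with ${\sf s} \neq {\sf t}$, so that the two blocks are distinct leaves of one tree $\tau_f$ and $\ell = \code{LCA}({\sf s},{\sf t})$ is a genuine internal node. Because $\ell$ carries exactly one of the three labels, and because each relation is determined solely by that label — parallel forces ${\sf s} \parallel {\sf t}$, conditional forces the two blocks into the two different branches, sequential forces them into different operands in a definite textual order — at most one relation type can hold, and at least one does. Mutual exclusivity is then immediate from the uniqueness of $\ell$'s label. Combining both directions with the disjoint-tree remark also disposes of the case ${\sf s} \not\sim {\sf t}$: there $\code{LCA}$ is undefined, none of the three relations holds, and both sides of the biconditional are simultaneously false.

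The main obstacle I anticipate is not the tree reasoning but the \emph{directionality} baked into the definitions of $\prec$ and $\uparrow$: when $\ell$ is sequential the clean conclusion is ``${\sf s} \prec {\sf t}$ or ${\sf t} \prec {\sf s}$,'' and $\uparrow$ likewise fixes which branch each block occupies. To make ``exactly one of ${\sf s} \prec {\sf t}$, ${\sf s} \uparrow {\sf t}$, ${\sf s} \parallel {\sf t}$'' literally correct I would either read the lemma up to the ordering of the pair $\{{\sf s},{\sf t}\}$, so that the sequential case contributes whichever orientation of $\prec$ the textual order dictates, or fix the convention that the two blocks are presented in textual order. I would also flag the degenerate case ${\sf s} = {\sf t}$, where $\code{LCA}$ is the leaf itself and no relation holds; the statement is therefore implicitly about two distinct blocks, and I would make this assumption explicit at the top of the proof.
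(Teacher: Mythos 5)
Your proof is correct and follows exactly the reasoning the paper intends: the lemma is stated without an explicit proof, but the paragraph preceding it already sketches the same LCA-on-the-syntax-tree argument, of which your writeup is a careful elaboration (including the sensible caveats about the orientation of ${\sf s} \prec {\sf t}$ and the degenerate case ${\sf s} = {\sf t}$, which the paper glosses over).
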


\subsubsection*{Read\&Write analysis.}
\label{sec:readwrite}
In our framework, data dependences are represented and analyzed at the block level. We perform a static analysis over the program to extract the sets of local fields and variables being accessed in each non-call block. Intuitively, we use several read sets and write sets to represent local fields and global variables being read or written, respectively, in each statement block.

For every non-call block \code{s}, we build the read set $R_{\sf s}$ by adding all data fields and local variables occurred in an if-condition or on the RHS of an assignment. The data fields can be from the current node (such as \code{n.v}) or a neighbor node (such as \code{n.l.v}).
The write set $W_{\sf s}$ can be built similarly: all data fields and local variables occurred on the LHS of an assignment are added. 

\section{Formulating Reachability}
\label{sec:wp}

Note that the speculative execution of a function is completely deterministic as all initial parameters and return values from function calls are determined by $M$. More specifically, for every code snippet \code{l} without branching and every logical constraint $\varphi$ that should be satisfied after running \code{l}, we  can compute the weakest precondition $\code{wp}(\code{l}, \varphi, M)$ that must be satisfied before running \code{l}. The definition of \code{wp} is shown in Figure~\ref{fig:wp}.


\begin{figure}[!tb]
\begin{displaymath}
\begin{array}{lcl}
\code{wp}(n.f = \AExpr,~ \varphi,~ M) & = & \varphi[\AExpr / n.f] \\
\code{wp}(v = \AExpr,~ \varphi,~ M) & = & \varphi[\AExpr / v] \\
\code{wp}(\bar{v} = t(\dots),~ \varphi,~ M) & = & \varphi[M(\code{s}) / v] \quad \textrm{where } \code{s} \textrm{ is the id of the current statement} \\
\code{wp}(\code{l ; l'},~ \varphi,~ M) & = & \code{wp}(\code{l}, \code{wp}(l', \varphi)) \\
\end{array}
\end{displaymath}
\caption{Weakest Precondition}
\label{fig:wp}
\end{figure}

Now if ${\sf s}$ is a call to function $g$, we can determine if the speculative execution of $g$ with respect to $M$ hits block ${\sf t}$.
The path from the entry point of $g$ to \code{t} will be a straight-line sequence of statements of the form 
\begin{equation*}\label{eq:code}
\code{l}_1; \code{assume}(\code{c}_1); \dots; \code{assume}(\code{c}_{n-1}); \code{l}_n; \code{t}
\end{equation*}
where every branch condition is converted to a corresponding $\code{assume}(\code{c}_i)$.
Then we can compute the path condition for \code{t} by computing the weakest precondition for every condition $\code{c}_i$ on the path:
\begin{equation*}\label{eq:wp}
\textit{WP}(\code{c}_i, M) \equiv \code{wp}(\code{l}_1;\dots;\code{l}_i,~ \code{c}_i,~ M)[{M(\bar{p})} / \bar{p}]
\end{equation*}
where $\bar{p}$ is the sequence of arguments for $g$.

Moreover, when \code{t} is another call block, we also need to make sure the initial parameters in $N$ match the speculative execution of the above code sequence w.r.t. $M$. We denote this condition as $\textit{Match}_{\code{s,t}}(u, v, M, N)$. 


\end{document}